\newtheorem{theorem}{Theorem}
\newtheorem{lemma}{Lemma}
\newcommand{\Eb}{\mathbb{E}}
\newcommand{\E}[1]{\mathbb{E}\left[#1\right]}
\newcommand{\var}{\text{var}}
\begin{document}
\title{Performance Analysis of Cell Free Massive MIMO systems in LoS/ NLoS Channels}
\author{\IEEEauthorblockN{Sudarshan Mukherjee, Ribhu Chopra} 
		
		\thanks{R. Chopra and S. Mukherjee  are with the Department of Electronics and Electrical Engineering, Indian Institute of Technology Guwahati, Assam, India. (email: {ribhufec, smukherjee}@iitg.ac.in).	
			
						}	
		 }
\maketitle
\begin{abstract}
In cellular communication systems, it is conventional to assume the absence of a line of sight (LoS) path between the users and their associated access points (APs). This assumption however becomes questionable in the context of recent developments in the direction of cell free (CF) massive MIMO systems. In the CF massive MIMO, the AP density is assumed to be comparable with the user density, which increases probability of existence of an LoS path between the users and their associated APs. In this paper, we analyze the performance of an uplink CF massive MIMO system, with a probabilistic LoS channel model. Here, we first derive the effective statistics of this channel model, and argue that their behaviour is fundamentally different from that of the conventional rich scattering channels. Utilizing these statistics, we next compare the rates achievable by CF massive MIMO systems, under both stream-wise and joint decoding at the central processing unit. Following this, we also discuss the centralized MMSE based data detection to obtain a complexity/ performance trade-off. Finally, using detailed Monte-Carlo simulations, we validate our analytical results, and evaluate the performance of the three data detection schemes.

\end{abstract}
\begin{IEEEkeywords}
	Cell Free Massive MIMO, 3D channels, line-of-sight (LoS) probability, message passing, performance analysis, achievable rate.
\end{IEEEkeywords}
\section{Introduction}
\subsection{Motivation}
{The idea of using a large number of co-located antennas, typically on a base station~(BS) to simultaneously serve a smaller number of user equipments~(UEs) over the same time frequency resources has generated much research interest over the past decade~\cite{Marzetta_TWC_2010}. This architecture consisting of a large number of antennas~(of the order of hundreds) to serve a few tens of users is dubbed as massive multiple-input-multiple-output~(MIMO)~\cite{Chokes_BSR_LMIMO,Red_book}.} The key benefits of massive MIMO include greater spectral and energy efficiencies~\cite{scaling} coupled with analytical tractability. Additionally, these advantages of massive MIMO can be achieved via employing simple linear signal processing techniques at the BS~\cite{CHEMP}. In view of these developments, massive MIMO has been widely accepted as a key enabling technology for the fifth generation~(5G) wireless systems~\cite{reality}. At the same time, due to their co-located nature, masive MIMO systems inherit the problem of non-uniform coverage of cellular users from the previous generations of cellular systems. This makes massive MIMO systems inherently unfair to the UEs located at the cell edges, or physically far away from the BSs. In view of this limitation, the research focus has recently shifted to a distributed architecture for massive MIMO systems, which eliminates the notion of cellular structure from the network. This new architecture, that has emerged as a front-runner among the various suggested solutions, is dubbed as the cell-free~(CF) massive MIMO~\cite{Ngo_TWC_2017}.

The key idea in CF massive MIMO systems is to have a large number of access points~(APs), each equipped with either a single or a small number of antennas, and spread over a large geographical area, to simultaneously serve few tens of users over the same time frequency resources. Here, the total AP antenna density is considered to be much larger than the UE density.  Additionally, all the APs are assumed to be connected to a central processing unit~(CPU) via a high rate backbone link~\cite{Ngo_TWC_2017}. Consequently, all the information available at the APs can be accurately shared with the CPU. CF massive MIMO systems have been known to inherit the advantages of conventional cellular massive MIMO systems, such as power scaling and simple linear processing, in addition to offering a near uniform coverage to all the UEs~\cite{Ngo_SPAWC_2015}.

{Most of the existing literature on the CF massive MIMO systems assumes either Rayleigh fading~\cite{Ngo_TWC_2017,Shamai_VTC_2001,Ngo_TGCN_2018,Nayebi_TWC_2017,Bjornson_TWC_2020,Bashar_ICC_2018,Dhillon_CFMM,Zhang_Access_2018,Papa_TVT_2020,making_CF} or Rician fading channels~\cite{Ozdogan_TWC_2019,Zhang_Comml_2002,Jin_sys_2020}. These assumptions, however, correspond to either of the two extremes of channel fading modelling. Note that, due to the geographically distributed nature of APs and various blockages (e.g. buildings in urban areas, trees, hillocks etc.), it is impossible to accurately determine the presence of a line of sight~(LoS) link between the APs and the UEs. Therefore, it is important that we characterize the performance of CF massive MIMO systems, under probabilistic LoS components. Additionally, the changes in channel fading distribution due to the presence/absence of an LoS link between APs and UE is known to significantly impact the overall system performance~\cite{Atzeni_TWC_2018}. }

\subsection{Prior Work}
{The idea of using only the locally available channel state information (CSI) to serve a large number of users in CF massive MIMO is based on the concept of network MIMO~\cite{Shamai_VTC_2001}. In~\cite{Ngo_TWC_2017,Ngo_TGCN_2018}, a simplistic version of CF massive MIMO with an infinite rate instantaneous backhaul link was considered, and it was shown that even with local conjugate beamforming, CF massive MIMO can out-perform both the small cell network model, and co-located massive MIMO model in terms of throughput and energy efficiency respectively. In~\cite{Ngo_TWC_2017}, it was also argued that it is impractical to forward CSI from all APs to the CPU, and was suggested that the CPU can perform data detection based on pre-processed samples from the APs. Later in~\cite{Nayebi_TWC_2017} it was shown that with an infinite capacity backhaul, the CPU can acquire CST for all channels, and use the centralized minimum mean squared error~(MMSE) combining for data detection. The issue of impracticality of infinite rate backhaul was raised in~\cite{Bashar_ICC_2018} and~\cite{Dhillon_CFMM}, where the effects of quantization in the backhaul link were derived. The trade-off between performance and backhaul requirements of conjugate and MMSE beam-forming was later discussed in detail in~\cite{making_CF}. Finally, the effects of hardware imperfections in CF massive MIMO have been discussed in~\cite{Zhang_Access_2018}. }

It is common to assume that the advantages of channel hardening and favorable propagation, enjoyed by cellular massive MIMO systems are also inherited by CF massive MIMO systems~\cite{Ngo_TWC_2017}. However, it was argued in~\cite{Hardening_CF} that due to the distributed nature of the system, the channels between different APs and users are non-identically distributed. This leads to CF massive MIMO inheriting these properties only when the number of antennas at individual APs is large enough. It has also been argued in~\cite{Papa_TVT_2020} that given the geographically distributed nature of both APs and the UEs, stochastic geometry can be used to accurately characterize the behaviour of CF massive MIMO systems.

In  recent years, the performance of conventional sparse and dense cellular networks has been analyzed under probabilistic LoS/NLoS links between the BSs and UEs~\cite{Ding_TWC_2016,Ding_TWC_2017,Cho_Comml_2018}. It has been shown that the effect of LoS/NLoS channels is more pronounced in the case of dense cellular networks, in comparison to the conventional sparse networks, where the performance is dominated by the path loss component.  It has also been shown via repeated analytical modelling and simulations that in case of dense cellular networks, the network coverage probability first increases with the AP density, and then begins to decrease, when the AP density crosses a threshold value~\cite{Ding_TWC_2017,Cho_Comml_2018}. Furthermore, depending on the accuracy and effectiveness of the modelling for LoS/NLoS channels, the predicted behaviour of the area spectral efficiency (ASE) performance of the system has also been observed to change~\cite{Ding_TWC_2016,Ding_TWC_2017}. 

In the recent recommendations of international telecommunications union (ITU), it has been argued that the existence of LoS links between APs and UE not only depends on their horizontal separation, but also on their individual antenna heights, blockage distribution, as well as on the distribution of blockage height~\cite{Bai_TWC_2014, Hourani_WCL_2014,3GPPS_1,IMT2020propagate}. Therefore, without thorough considerations for the above mentioned factors that affect the existence of LoS links, it would be difficult to truly characterize the design parameters that optimize the network performance at higher AP densities. A similar idea is also applicable for CF massive MIMO systems. However, to the best of our knowledge, no such attempt has been made so far to appropriately characterize the effect of LoS/NLoS links in CF massive MIMO systems in an consolidated manner. 

\subsection{Contributions}
In this paper, we model the channel between the APs and UEs in a CF massive MIMO system accounting for the presence of LoS links. We analyze the uplink performance of the aforementioned system under the assumption of the availability of accurate CSI at the CPU, as well as under that of MMSE channel estimation by the APs. In addition to the conventional approach of analyzing the achievable rates where each users' data is processed separately at the CPU, we also discuss the joint processing of the user data at the CPU. This approach, though prohibitively complex, provides an upper bound on the performance of such a system, and motivates us to analyze the performance of centralized MMSE based data detection. We summarize our main contributions as follows:
\begin{enumerate}
	\item We model the channel between the APs and the UEs with a probabilistic LoS component, and derive bounds on the rates achievable in the uplink under conjugate beamforming based data detection with distributed processing.  (See Sections \ref{sec:model} and \ref{sec:stream_wise}.)
	\item We derive bounds on the uplink rates achievable under joint decoding of the user data at the CPU, under both accurate and estimated CSI at the CPU. (See Section \ref{sec:joint})
	\item Motivated by the significant performance gap in the rates achievable using joint and distributed data detection, and their respective computational requirements, we discuss centralized MMSE based data detection at the CPU (See Section \ref{sec:message}).
	\item Via detailed simulations, we validate our derived results and compare the performances of the above three data detection techniques in the presence of both accurate and estimated CSI. We also discuss the effect of various system parameters such as UE/ AP density, on the overall performance of the system, and use these to prescribe the values of different controllable parameters (see Section~\ref{sec:results}). 
\end{enumerate}
The key takeaway of this work is that the performance of CF massive MIMO systems varies significantly in the presence of LoS/ NLoS channels, and the pilot and power allocation techniques devised for these systems need to be revisited. It is also interesting to note the gap between the performance of these systems under joint and stream-wise processing, indicating the need for better interference cancellation algorithms.

\begin{table}
\caption{Notations used throughout the paper.} \label{table:notation}

\begin{center}
\rowcolors{2}
{cyan!15!}{}
\renewcommand{\arraystretch}{1.0}
\begin{tabular}{| c | p{6.5cm} || c | p{6.5cm} |}
\hline 
 {\bf Notation} & {\hspace{2.5cm}}{\bf Definition} & {\bf Notation} & {\hspace{2.5cm}}{\bf Definition}
\\
\midrule
\hline
$\ell_m$ & Height of the $m$-th AP & $\ell_k^{'}$ & Height of the $k$-th UE \\ 
$\delta_{mk}$ & Index variable for LoS channel between the $m$-th AP and $k$-th UE & $P_{mk} \triangleq \E{\delta_{mk}}$ & Probability of the LoS channel between the $m$-th AP and $k$-th UE \\ 
$\bar{\mathbf{h}}_{mk}$ & LoS channel gain between the $m$-th AP and $k$-th UE & $\dot{\mathbf{h}}_{mk}$ & NLoS channel gain between the $m$-th AP and $k$-th UE \\ 
$x_{mk}$ & $3$-D link distance between the $m$-th AP and $k$-th UE & $d_{mk}$ & Horizontal (2D) distance between the $m$-th AP and $k$-th UE \\ 
$N$ & Number of antennas at the APs & $d$ & Spacing between antennas at the APs \\ 
$M$ & Number of APs in the system & $K$ & Number of UEs in the system \\ 
$d_0$ & Reference distance between any transceiver pair & $\eta$ & Pathloss exponent \\ 
$\alpha$ & Fraction of built-up area in the network & $\gamma$ & Average altitude of building/blockage \\ 
$\mu$ & Average number of blockages per unit area & $\lambda_c$ & Carrier wavelength \\ 
$\mathcal{E}_p$ & Average transmit pilot power & $\psi_k[n]$ & Orthogonal pilot for the $k$-th UE at $n$-th instance\\ 
$\mathcal{E}_{s,k}$ & Average transmit symbol power from the $k$-th UE & $\E{.}$ & Expectation operation \\ 
$\otimes$ & Kronecker product & $\odot$ & Hadamard product \\ 
$\mathcal{S}$ & Set of transmission symbols & $\left| \mathcal{S} \right|$ & Cardinality of set $\mathcal{S}$ \\ 
$\mathbf{O}_N$ & Null matrix of order $N$ & $\mathbf{I}_N$ & Identity matrix of order $N$ \\ 
$\mathbf{0}_N$ & Null vector of length $N$ & $\iota $ & $\sqrt{-1}$ \\ 
\hline 
\end{tabular}
\end{center}
\end{table}%

\section{System Model}\label{sec:model}
\begin{figure}[t]
	\centering
	\includegraphics[width=0.7\textwidth]{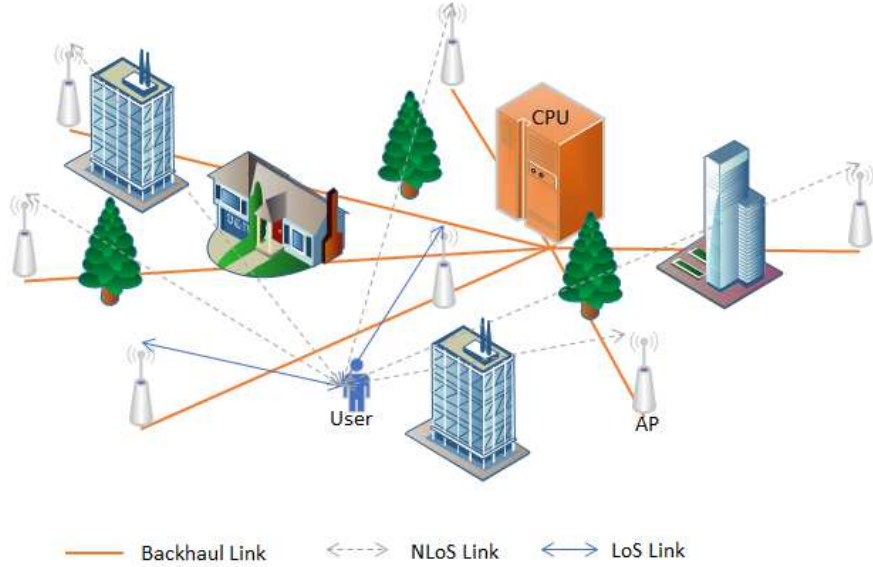}
	\caption{The System Model}
	\label{fig:model}
\end{figure}
We consider a CF massive MIMO system with $M$ APs, each equipped with a uniform linear array of $N$ antennas, serving a total of $K \ll MN$ user equipments~(UEs). The $m$th AP is assumed to be at a height $\ell_m$, with its antennas separated by a distance $d$. Similarly, the $k$th UE is assumed to be situated at a height $\ell_k^{\prime}$~(See Fig.~\ref{fig:model}.)	
We note that the line-of-sight (LoS) path between a UE and an AP may be obstructed due to the presence of blockages (e.g. buildings in urban areas, trees/ hillocks in rural areas, etc.). However, due to the random nature of the locations of UEs and APs, the existence of LoS paths between these cannot be guaranteed. Therefore, in this work, we characterize the channel between a UE and an AP as a combination of both LoS and NLoS (non-line-of-sight) channels. As a result, the channel between the $m$th AP and the $k$th UE, $\mathbf{h}_{mk}\in\mathbb{C}^{N\times1}$,  is given by
\begin{equation}
\mathbf{h}_{mk}=\delta_{mk}\bar{\mathbf{h}}_{mk}+\sqrt{\beta_{mk}}\mathbf{\dot{h}}_{mk}.
\label{eq:channelcombo}
\end{equation}
\noindent Here $\mathbf{\dot{h}}_{mk}$ represents the fast fading component of the NLoS channel, that is assumed to consist of independent and identically distributed~(i.i.d.) zero mean circular symmetric complex Gaussian (ZMCSCG) entries, i.e., $\mathbf{\dot{h}}_{mk} \sim \mathcal{C}\mathcal{N}(0, \textbf I_N)$. The coefficient $\beta_{mk}\triangleq\min\left\{1,\left(\frac{d_{mk}}{d_0}\right)^{-\eta}\right\}$ denotes the standard pathloss component of the NLoS channel, with $d_0$ being the modelled reference distance, $d_{mk}$ denoting the horizontal distance between the $m$th AP and the $k$th UE, and $\eta(>2)$ being the path loss exponent. Here, $\delta_{mk}$ is a binary valued random variable, that indicates the presence or the absence of an LoS component between the $m$th AP and $k$th UE, such that  $\Pr\{\delta_{mk}=1\}=P_{mk}$.
In \eqref{eq:channelcombo}, $\bar{\mathbf{h}}_{mk}$ denotes the LoS channel, that depends on the three dimensional link distance between the $k$th UE and the $m$th AP, $x_{mk} \triangleq \sqrt{d_{mk}^2 + (\ell_m - \ell_k^{\prime})^2}$, as well as on the angle of arrival~(AoA) at the AP, $\theta_{mk}$. Thus, $\bar{\mathbf{h}}_{mk}$ is expressed as, 
\begin{equation}
\bar{\mathbf{h}}_{mk}=\mathbf{a}(\theta_{mk})\sqrt{{G_mG_k}}\left(\frac{\ell_{k}^{'}\ell_m}{4\pi x_{mk}}\right)e^{\iota2\pi\frac{x_{mk}}{\lambda_c}},
\end{equation} 
with $\lambda_c$ denoting the carrier wavelength, $\iota\triangleq\sqrt{-1}$, $G_m$ and $G_k$ being the gains associated with the antennas of the $m$th AP and $k$th UE respectively, and $\mathbf{a}(\theta_{mk})$ representing the AoA vector at the $m$th AP from the $k$th UE given as,
\begin{equation}
\label{eq:aoavec}
\mathbf{a}(\theta_{mk})=\left[1,e^{\iota2\pi\frac{d}{\lambda_c}\sin(\theta_{mk})},e^{\iota4\pi\frac{d}{\lambda_c}\sin(\theta_{mk})},\ldots,e^{\iota2(N-1)\pi\frac{d}{\lambda_c}\sin(\theta_{mk})}\right]^T.
\end{equation}
\subsection{LoS link probability model}
We note that the channel between the $m$-th AP and the $k$-th UE can either be LoS or NLoS depending on the existence of blockages between them, and the heights of these blockages. Therefore, the LoS link probability $P_{mk}$ depends on geographical distribution parameters of the network, such as the building height distribution, building locations, heights of APs and UEs, etc.. The existing 3GPP model~\cite{3GPPr112_2} however does not account for all these factors, and hence cannot characterize the exact LoS link probabilities. A more complete model for the LoS link probabilities has been discussed in~\cite{Kim_TWC_2020} using the ITU blockage model~\cite{IMT2020propagate}. Using this model, the LoS link probability, $P_{mk}$,  is given by
\begin{equation}
	\label{eq:losprob}
	P_{mk} =  (1 - \omega)^{\sqrt{\alpha \, \mu \, d_{mk}}} 
\end{equation}

\noindent where $\omega \triangleq \sqrt{\frac{\pi}{2}}\frac{\gamma}{\ell_m – \ell^{\prime}_k} \left[ \text{erf}\left(\frac{\ell_m}{\gamma\sqrt{2}}\right) - \text{erf}\left(\frac{\ell^{\prime}_k}{\gamma\sqrt{2}}\right)\right]$, $\text{erf}(z) \triangleq \frac{1}{\sqrt{\pi}}\int_{-z}^{z}e^{-t^2} dt$,
%
%
$\gamma$ is the average altitude of blockages,  $\alpha$ is the fraction of the built up area, and $\mu$ is the average number of blockages per unit area~\cite{IMT2020propagate}.

\subsection{Uplink Channel Estimation}
For simplicity we assume the availability of $K$ orthogonal pilot sequences in each coherence interval, with the pilot symbol sent by the $k$th UE at the $n$th instant given as $\psi_k[n]$, such that $\sum\limits_{n=1}^{K}\psi_k[n]\psi_{l}^*[n]=\delta[k-l]$, {where $\delta[.]$ denotes the Kroneckar delta}. We also assume that all the users share the same average pilot power, $\mathcal{E}_p$. Therefore, the pilot signal received by the $m$th AP at the $n$th instant is given by
\begin{equation}
\mathbf{y}_m[n]=\sum_{k=1}^K \sqrt{\mathcal{E}_p}\mathbf{h}_{mk}\psi_{k}[n]+\sqrt{N_0}\mathbf{w}_m[n].
\end{equation}

\noindent where $\mathbf{w}_m[n] \sim \mathcal{CN}(\mathbf{0}_N,\textbf I_N)$ denotes the independent additive white Gaussian noise (AWGN). Defining $\mathbf{y}^{'}_{mk}\triangleq\sum\limits_{n=1}^{K}\mathbf{y}_{m}[n]\psi_{l}^*[n]$, we obtain
\begin{equation}
\mathbf{y}^{'}_{mk}= \sqrt{\mathcal{E}_p}\mathbf{h}_{mk}+\sqrt{N_0}\mathbf{w}^{'}_k[n],
\label{eq:processpilot}
\end{equation}
with $\mathbf{w}^{'}_k[n]$ defined implicitly. 
Now, given $\beta_{mk}$, and the knowledge of $\delta_{mk}$ at the APs, we have
\begin{equation}
\bm{\Sigma}_{hh,mk}\triangleq\E{\mathbf{h}_{mk}\mathbf{h}^H_{mk}\vert \delta_{mk}}=\delta_{mk}G_mG_k\left( \frac{\ell'_k\ell_m}{4\pi x_{mk}} \right)^2\mathbf{a}(\theta_{mk})\mathbf{a}^{H}(\theta_{mk})+\beta_{mk}\mathbf{I}_N.
\label{eq:sigma_hh}
\end{equation}
Consequently we can write the covrinace matrix of $\mathbf{y}'_{mk}$ as,
\begin{equation}
\bm{\Sigma}_{y^{'}y^{'},mk}\triangleq\E{\mathbf{y}^{'}_{mk}\mathbf{y}^{'H}_{mk}\vert \delta_{mk}}=\mathcal{E}_p\bm{\Sigma}_{hh,mk}+N_0\mathbf{I}_N,
\label{eq:sigma_yy}
\end{equation}
and the cross covariance matrix between $\mathbf{y}'_{mk}$, and $\mathbf{h}_{mk}$ as,
\begin{equation}
\bm{\Sigma}_{hy',mk}\triangleq\E{\mathbf{h}_{mk}\mathbf{y}^{'H}_{mk}\vert \delta_{mk}}=\sqrt{\mathcal{E}_p}\bm{\Sigma}_{hh,mk}.
\label{eq:sigma_hy}
\end{equation}
We can now write the LMMSE estimate $\mathbf{\hat{h}}_{mk}$ of $\mathbf{h}_{mk}$ as~\cite{Kay1}
\begin{equation}
\mathbf{\hat{h}}_{mk}=\sqrt{\mathcal{E}_p}\bm{\Sigma}_{hh,mk}\bm{\Sigma}^{-1}_{y'y',mk}\mathbf{y}^{'}_{mk}.
\end{equation}
Letting $\mathbf{C}_{mk}\triangleq \E{\mathbf{\hat{h}}_{mk}\mathbf{\hat{h}}^H_{mk}}= \mathcal{E}_p\bm{\Sigma}_{hh,mk}\bm{\Sigma}^{-1}_{y'y',mk}\bm{\Sigma}_{hh,mk}$, and $\bar{\mathbf{C}}_{mk}=\bm{\Sigma}_{hh}-\mathbf{C}_{mk}$ we can now express $\mathbf{h}_{mk}$ as
\begin{equation}
\mathbf{{h}}_{mk}=\mathbf{C}_{mk}^{\frac{1}{2}}\mathbf{\hat{h}}_{mk}+\mathbf{\bar{C}}_{mk}^{\frac{1}{2}}\mathbf{\tilde{h}}_{mk},
\label{eq:ch_est}
\end{equation}
where $\mathbf{\tilde{h}}_{mk}$ is a ZMCSCG random vector with i.i.d. unit variance entries, and $E[\mathbf{\tilde{h}}_{mk}\mathbf{\hat{h}}^H_{mk}]=\mathbf{O}_N$, where $\mathbf{O}_N$ corresponds to the all zero order $N$ square matrix.


\section{Uplink Performance Analysis with Conjugate Beamforming}\label{sec:stream_wise}
In this section, we evaluate the uplink performance of a CF massive MIMO system under LoS/ NLoS channels with conjugate beamformaing and stream wise data detection at the CPU. We assume that all the UEs simultaneously transmit data to the APs over the same time frequency resource. The APs, upon receiving the symbols, perform conjugate beamforming~\cite{Ngo_TWC_2017} and forward the combined symbols to the CPU for stream-wise data detection. That is, for ease of computation, the data transmitted by each user is detected separately. 
\subsection{Achievable Rate Analysis}
Assuming that the $k$th UE transmits the symbol $s_k[n]$, such that $\E{\lvert s_k[n] \rvert^2}=\mathcal{E}_{s,k}$, we can write the signal received at the $m$th AP as  
\begin{equation}
\mathbf{y}_{m}=\sum_{k=1}^K\mathbf{h}_{mk}s_k+\sqrt{N_0}\mathbf{w}_m=\mathbf{H}_m\mathbf{s}+\sqrt{N_0}\mathbf{w}_m,
\end{equation}
where $\mathbf{H}_{m}\in\mathbb{C}^{N\times K }=[\mathbf{h}_{m1},\mathbf{h}_{m2},\ldots,\mathbf{h}_{mK}]$, $\mathbf{s}=[s_1,\ldots,s_K]^T$, and $\mathbf{w}_m\sim\mathcal{CN}(\mathbf{0}_N,\mathbf{I}_N)$. {  Using~\eqref{eq:channelcombo}, we can now express $\mathbf{H}_m$ as
$\mathbf{H}_m=\mathbf{\bar{H}}_m\text{diag}(\boldsymbol{\delta}_m)+\mathbf{\dot{H}}_m\text{diag}(\boldsymbol{\beta}_m),
$
with $\boldsymbol{\delta}_m=[\delta_{m1},\delta_{m2},\ldots,\delta_{mK}]^T$, $\boldsymbol{\beta}_m=[\beta_{m1},\beta_{m2},\ldots,\beta_{mK}]^T$,  $\mathbf{\dot{H}}_m=[\mathbf{\dot{h}}_{m1},\mathbf{\dot{h}}_{m2},\ldots,\mathbf{\dot{h}}_{mK}]$, and $\mathbf{\bar{H}}_m=[\mathbf{\bar{h}}_{m1},\mathbf{\bar{h}}_{m2},\ldots,\mathbf{\bar{h}}_{mK}]$. }

Following this, the APs use conjugate beamforming with the available CSI to obtain the signal $r_{mk}$, such that, when accurate CSI is available at the APs, 
\begin{equation}
r_{mk}= \mathbf{h}_{mk}^H \mathbf{y}_m = \mathbf{h}_{mk}^H\sum_{l=1}^{K}\mathbf{h}_{ml}s_l+\sqrt{N}_0\mathbf{h}_{mk}^H\mathbf{w}_m =\sum_{l=1}^{K}g_{m,kl}s_l+z_{mk},
\end{equation}
where $g_{m,kl}$ and $z_{mk}$ are defined implicitly.
Following this, the APs share their respective $r_{mk}$s with the CPU over the backhaul link, wherein these are combined to obtain $r_k=\sum\limits_{m=1}^Mr_{mk}$, that is, 
\begin{equation}
r_k=g_{kk}s_k+\sum_{\substack{l=1\\l\neq k}}^Kg_{kl}s_l+z_k. 
\end{equation}

Now, considering that the data transmitted by each user is detected individually, and treating the interference due to the other users' data streams as noise~\cite{HandH}, the rate achievable by the $k$th user can be expressed as,
\begin{equation}
R_k=\E{\log_2\left(1+\frac{\lvert g_{kk}\rvert^2\mathcal{E}_{s,k}}{\sum\limits_{\substack{l=1\\l\neq k}}^K\E{|g_{kl}|^2}\mathcal{E}_{s,l}+\text{var}(z_k)}\right)}. 
\label{eq:Rk}
\end{equation}
\begin{lemma}\normalfont
	The rate achievable by the $k$th user under accurate CSI at the APs can be upper bounded as
	\begin{equation}
	R_k\le\log_2\left(1+\frac{\E{\lvert g_{kk}\rvert^2}\mathcal{E}_{s,k}}{\sum\limits_{\substack{l=1\\l\neq k}}^K\E{|g_{kl}|^2}\mathcal{E}_{s,l}+\text{var}(z_k)}\right). 
	\label{eq:upper_Rk}
	\end{equation}
where,

\begin{multline}
\E{\lvert g_{kk} \rvert^2} =N^2G_k^2\left(\frac{\ell_{k}^{'}}{4\pi}\right)^4\left(\sum_{m=1}^{M}P_{mk}G_m^2\left(\frac{\ell_m}{x_{mk}}\right)^4+\sum_{m=1}^{M}\sum_{\substack{l=1\\l\neq m}}^{M}P_{mk}P_{lk}G_mG_l\left(\frac{\ell_m\ell_l}{x_{mk}x_{lk}}\right)^2\right)\\+N\sum_{m=1}^M\beta_{mk}^2+4N\sum_{m=1}^MP_{mk}\beta_{mk}G_kG_m\left(\frac{\ell_{k}^{'}\ell_m}{4\pi x_{mk}}\right)^2,
\label{eq:meangkk}
\end{multline}
\begin{multline}
\E{|g_{kl}|^2}=\sum_{m=1}^MP_{ml}  P_{mk} G_m\sqrt{G_kG_l}\left(\frac{\ell_{k}^{'}\ell_l\ell_m^2}{16\pi^2 x_{mk}x_{ml}}\right)e^{\iota\frac{2\pi}{\lambda_c}(x_{mk}-x_{ml})}\\ \times \sum_{i=1}^N e^{\iota \frac{2\pi d}{\lambda_c}i(\sin(\theta_{mk})-sin(\theta_{ml}))}+N\sum_{m=1}^M\beta_{mk}\beta_{ml}+4NG_k\left(\frac{\ell_{k}^{'}}{4\pi}\right)^2\sum_{m=1}^MP_{mk}\beta_{ml}G_m\left(\frac{\ell_m}{x_{mk} }\right)^2,
\label{eq:vargkl}
\end{multline}
and 

\begin{equation}
\text{var}({z_k})=N_0\mathbf{h}_k^H\mathbf{I}_{MN}\mathbf{h}_k=N_0N\sum_{m=1}^M\left(P_{mk}G_kG_m\left(\frac{\ell_{k}^{'}\ell_m}{4\pi x_{mk}}\right)^2+\beta_{mk}\right).
\end{equation}
\end{lemma}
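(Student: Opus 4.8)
The plan is to treat the upper bound \eqref{eq:upper_Rk} and the three second-moment expressions separately, since the bound is a one-line consequence of concavity whereas the moments demand a careful second-order expansion of the beamformed gains. For \eqref{eq:upper_Rk}, observe that in \eqref{eq:Rk} the only randomness remaining inside the outer expectation is the signal gain $|g_{kk}|^2$ in the numerator, because the interference and noise contributions in the denominator have already been replaced by the deterministic quantities $\E{|g_{kl}|^2}$ and $\var(z_k)$. Writing $D \triangleq \sum_{l\neq k}\E{|g_{kl}|^2}\mathcal{E}_{s,l} + \var(z_k)$, we have $R_k = \E{\log_2(1 + (\mathcal{E}_{s,k}/D)\,|g_{kk}|^2)}$, and since $x\mapsto \log_2(1 + (\mathcal{E}_{s,k}/D)\,x)$ is concave on $[0,\infty)$, Jensen's inequality immediately yields $R_k \le \log_2(1 + (\mathcal{E}_{s,k}/D)\,\E{|g_{kk}|^2})$, which is exactly \eqref{eq:upper_Rk}.

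For the moments I would decompose every channel through \eqref{eq:channelcombo} into its LoS part $\delta_{mk}\bar{\mathbf{h}}_{mk}$ and NLoS part $\sqrt{\beta_{mk}}\dot{\mathbf{h}}_{mk}$, and repeatedly use four facts: $\delta_{mk}$ is binary so $\delta_{mk}^2=\delta_{mk}$ with $\E{\delta_{mk}}=P_{mk}$; the $\dot{\mathbf{h}}_{mk}$ are i.i.d. $\mathcal{CN}(\mathbf{0}_N,\mathbf{I}_N)$, mutually independent across $(m,k)$ and independent of $\{\delta_{mk}\}$ and $\mathbf{w}_m$; the steering vector obeys $\|\mathbf{a}(\theta_{mk})\|^2=N$ with $\mathbf{a}^H(\theta_{mk})\mathbf{a}(\theta_{ml})=\sum_i e^{\iota 2\pi (d/\lambda_c) i(\sin\theta_{ml}-\sin\theta_{mk})}$; and the complex-Gaussian moment identities $\E{\|\dot{\mathbf{h}}_{mk}\|^4}=N(N+1)$ and $\E{|\dot{\mathbf{h}}_{mk}^H\dot{\mathbf{h}}_{ml}|^2}=N$ for $l\neq k$. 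The noise variance is then the quickest: conditioning $\mathbf{h}_{mk}^H\mathbf{w}_m$ on $\mathbf{h}_{mk}$ and using independence of the $\mathbf{w}_m$ across $m$ gives $\var(z_k)=N_0\sum_m \E{\|\mathbf{h}_{mk}\|^2}$, and $\E{\|\mathbf{h}_{mk}\|^2}=N\big(P_{mk}G_mG_k(\ell_k'\ell_m/(4\pi x_{mk}))^2+\beta_{mk}\big)$ follows directly from the decomposition.

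For $\E{|g_{kk}|^2}$ with $g_{kk}=\sum_m \|\mathbf{h}_{mk}\|^2$ (real and nonnegative), I would split the square into diagonal terms $\E{\|\mathbf{h}_{mk}\|^4}$ and off-diagonal terms, the latter factoring as $\E{\|\mathbf{h}_{mk}\|^2}\E{\|\mathbf{h}_{m'k}\|^2}$ by AP-independence and producing the LoS--LoS cross-AP double sum in \eqref{eq:meangkk}. The diagonal fourth moment comes from squaring $\|\mathbf{h}_{mk}\|^2 = \delta_{mk}\|\bar{\mathbf{h}}_{mk}\|^2 + \beta_{mk}\|\dot{\mathbf{h}}_{mk}\|^2 + 2\delta_{mk}\sqrt{\beta_{mk}}\,\mathrm{Re}(\bar{\mathbf{h}}_{mk}^H\dot{\mathbf{h}}_{mk})$ and evaluating term by term, where the odd-order NLoS terms vanish and the survivors are a pure-LoS piece $P_{mk}N^2(G_mG_k(\ell_k'\ell_m/(4\pi x_{mk}))^2)^2$, a pure-NLoS piece $\beta_{mk}^2\,\E{\|\dot{\mathbf{h}}_{mk}\|^4}$, and LoS$\times$NLoS cross pieces using $\E{(\mathrm{Re}(\bar{\mathbf{h}}_{mk}^H\dot{\mathbf{h}}_{mk}))^2}=\tfrac12\|\bar{\mathbf{h}}_{mk}\|^2$. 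For $\E{|g_{kl}|^2}$ with $l\neq k$, I would expand $|\sum_m \mathbf{h}_{mk}^H\mathbf{h}_{ml}|^2$: the diagonal ($m=m'$) terms split into a LoS--LoS piece $P_{mk}P_{ml}|\bar{\mathbf{h}}_{mk}^H\bar{\mathbf{h}}_{ml}|^2$ carrying the beam-pattern factor, an NLoS--NLoS piece $N\beta_{mk}\beta_{ml}$, and LoS$\times$NLoS cross pieces $\propto P_{mk}\beta_{ml}\|\bar{\mathbf{h}}_{mk}\|^2$, while the cross-AP ($m\neq m'$) terms reduce to products of $\E{\mathbf{h}_{mk}^H\mathbf{h}_{ml}}=P_{mk}P_{ml}\,\bar{\mathbf{h}}_{mk}^H\bar{\mathbf{h}}_{ml}$ since the independent zero-mean NLoS contributions average out.

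The main obstacle is the bookkeeping in these fourth-moment expansions rather than any single hard inequality: I must track through the complex-Gaussian (Isserlis/Wick) structure which of the many mixed LoS/NLoS cross terms vanish by zero mean or independence and which survive, evaluate $\E{\|\dot{\mathbf{h}}_{mk}\|^4}$ and the real-part variance correctly, and carry the AoA steering-vector inner products through to the $\sum_i e^{\iota(\cdots)}$ beam factor. I also anticipate that collapsing the fully exact expansion to the compact forms \eqref{eq:meangkk}--\eqref{eq:vargkl} requires retaining the dominant per-AP LoS--LoS, NLoS--NLoS and LoS$\times$NLoS contributions and simplifying the remaining cross-AP and beam-pattern terms, so the most delicate part of the write-up will be stating precisely which remainders are kept and why they dominate.
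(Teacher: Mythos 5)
Your proposal is correct and follows essentially the same route as the paper: Jensen's inequality applied to \eqref{eq:Rk} gives \eqref{eq:upper_Rk}, and the moments are computed exactly as in Appendices~\ref{App:A} and~\ref{App:B}, i.e., via the LoS/NLoS decomposition together with $\delta_{mk}^2=\delta_{mk}$, cross-AP independence, and the standard complex-Gaussian moment identities; the only difference is organizational, since the paper first splits $g_{kk}=g_{kk}^{(1)}+g_{kk}^{(2)}+g_{kk}^{(3)}$ into LoS, NLoS and mixed components and then squares, whereas you expand into same-AP and cross-AP terms first.

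One substantive remark: the difficulty you flag in your closing paragraph is real, but it is an inconsistency of the lemma statement, not a gap in your derivation. The exact expansion (yours, or equivalently the sum of the paper's own Appendix~\ref{App:A} pieces, where $\E{|g_{kk}^{(2)}|^2}=N\sum_{m}\beta_{mk}^2+N^2\sum_{m}\sum_{l}\beta_{mk}\beta_{lk}$ and $\E{g_{kk}^{(1)}g_{kk}^{(2)*}}=\E{g_{kk}^{(1)}}\E{g_{kk}^{(2)}}$) retains order-$N^2$ cross products, namely $N^2\bigl(\sum_{m}\beta_{mk}\bigr)^2$ and $2N^2\bigl(\sum_{m}P_{mk}G_kG_m\bigl(\frac{\ell_k'\ell_m}{4\pi x_{mk}}\bigr)^2\bigr)\bigl(\sum_{l}\beta_{lk}\bigr)$, which are absent from \eqref{eq:meangkk}; moreover the mixed-term coefficient there ($4N$) does not match the exact computation ($2N$ from $\E{|g_{kk}^{(3)}|^2}$, the remaining contributions being $N^2$ terms). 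A similar issue affects \eqref{eq:vargkl}: its leading term is complex-valued, so it cannot equal the real, nonnegative quantity $\E{|g_{kl}|^2}$; your expansion, which produces $P_{mk}P_{ml}\lvert\bar{\mathbf{h}}_{mk}^H\bar{\mathbf{h}}_{ml}\rvert^2$ for the same-AP LoS--LoS piece, is the correct one. So do not plan on a dominance argument to recover \eqref{eq:meangkk}--\eqref{eq:vargkl} exactly: your computation agrees with the paper's appendix, and it is the compact formulas quoted in the lemma that drop terms without justification.
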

	\begin{proof}
	Applying Jensen's inequality~\cite{Cover_Thomas_IT}	to~\eqref{eq:Rk}, we obtain~\eqref{eq:upper_Rk}. The statistics of $g_{kk}$ and $g_{kl}$ under ideal CSI are derived in Appendix~\ref{App:A}, and the variance of $z_k$ is derived in Appendix~\ref{App:B}.
	\end{proof}
\begin{theorem}\normalfont
	A lower bound on the rate achievable by the $k$th UE can be written as,
	\begin{equation}
	R_k\ge   \left [ \log_2 \left({\Eb\left[\lvert g_{kk}\rvert^2\right]\mathcal{E}_{s,k}}\right)-\log_2(e)\frac{\text{var}(|g_{kk}|^2)}{2\Eb^2[|g_{kk}|^2]}-\log_2\left({\sum\limits_{\substack{l=1\\l\neq k}}^K\Eb\left[|g_{kl}|^2\right]\mathcal{E}_{s,l}+\text{var}(z_k)}\right) \right]^+,
	\label{eq:lower_Rk}
	\end{equation}
where $[x]^+$ denotes the operation $\max(x,0)$.
\end{theorem}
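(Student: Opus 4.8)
The plan is to start from the exact rate expression \eqref{eq:Rk} and exploit the structural fact that, by construction, the denominator of the signal-to-interference-plus-noise ratio is deterministic: the interference terms $\sum_{l\neq k}\E{|g_{kl}|^2}\mathcal{E}_{s,l}$ and $\text{var}(z_k)$ are already averaged quantities, so the only randomness inside the logarithm enters through $|g_{kk}|^2$ in the numerator. Writing $D\triangleq\sum_{l\neq k}\E{|g_{kl}|^2}\mathcal{E}_{s,l}+\text{var}(z_k)$, I would first discard the unity inside the logarithm using the elementary inequality $\log_2(1+y)\ge\log_2 y$, valid for $y>0$, to get
\begin{equation}
R_k\ge\E{\log_2\!\left(\frac{|g_{kk}|^2\mathcal{E}_{s,k}}{D}\right)}=\E{\log_2|g_{kk}|^2}+\log_2\mathcal{E}_{s,k}-\log_2 D.
\end{equation}

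The second step is to control the remaining expectation $\E{\log_2|g_{kk}|^2}$, which is where the variance-correction term originates. I would expand $f(x)=\log_2 x$ in a second-order Taylor series about the mean $\Eb[|g_{kk}|^2]$ (the delta-method expansion): since $f'(x)=\log_2(e)/x$ and $f''(x)=-\log_2(e)/x^2$,
\begin{equation}
\log_2 x\approx\log_2\!\big(\Eb[|g_{kk}|^2]\big)+\frac{\log_2(e)}{\Eb[|g_{kk}|^2]}\big(x-\Eb[|g_{kk}|^2]\big)-\frac{\log_2(e)}{2\,\Eb^2[|g_{kk}|^2]}\big(x-\Eb[|g_{kk}|^2]\big)^2.
\end{equation}
Taking expectations over $x=|g_{kk}|^2$, the linear term vanishes and the quadratic term contributes exactly $-\log_2(e)\,\text{var}(|g_{kk}|^2)/\big(2\,\Eb^2[|g_{kk}|^2]\big)$, yielding
\begin{equation}
\E{\log_2|g_{kk}|^2}\approx\log_2\!\big(\Eb[|g_{kk}|^2]\big)-\log_2(e)\frac{\text{var}(|g_{kk}|^2)}{2\,\Eb^2[|g_{kk}|^2]}.
\end{equation}
Substituting this back and merging $\log_2(\Eb[|g_{kk}|^2])+\log_2\mathcal{E}_{s,k}=\log_2(\Eb[|g_{kk}|^2]\mathcal{E}_{s,k})$ reproduces the three-term expression in \eqref{eq:lower_Rk}; the outer $[\cdot]^+$ is then appended simply because a rate is non-negative, so the bound may be tightened to $\max(\cdot,0)$ whenever its right-hand side turns negative. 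The first moments $\Eb[|g_{kk}|^2]$ and $\Eb[|g_{kl}|^2]$ are already available from Lemma~1 (Appendices~\ref{App:A} and~\ref{App:B}); the only genuinely new ingredient is $\text{var}(|g_{kk}|^2)$, which requires the fourth-order moment of $g_{kk}$ and would be computed separately.

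The main obstacle is rigor in the Taylor step rather than any missing algebra. The logarithm is concave, so Jensen's inequality already gives $\E{\log_2|g_{kk}|^2}\le\log_2\Eb[|g_{kk}|^2]$; the second-order expansion subtracts a positive correction and thus points in the direction of a lower bound, but it remains an approximation, since the sign and magnitude of the neglected third- and higher-order remainders are not controlled in general. Making this a strict inequality would require either a remainder estimate exploiting structural properties of the distribution of $|g_{kk}|^2$, or an appeal to the regime of interest (large $N$, where $|g_{kk}|^2$ concentrates about its mean and the higher-order terms become negligible). In keeping with the standard massive-MIMO treatment, I would present \eqref{eq:lower_Rk} as the delta-method lower bound and defer the concentration argument that justifies neglecting the remainder, flagging this as the single place where the derivation is approximate rather than exact.
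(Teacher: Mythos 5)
Your derivation retraces the paper's proof almost step for step: the paper likewise discards the unity via $\log_2(1+y)\ge\log_2 y$, appends $[\cdot]^+$ from the non-negativity of the rate, splits the logarithm (the denominator being deterministic), and normalizes $\psi_k=|g_{kk}|^2/\Eb[|g_{kk}|^2]$. The one place you diverge is exactly the step you flag: where you invoke a second-order Taylor (delta-method) expansion and concede it is only an approximation, the paper instead asserts the \emph{pointwise} inequality
\begin{equation*}
\log_2(\psi_k)\;\ge\;\log_2(e)\left((\psi_k-1)-\frac{(\psi_k-1)^2}{2}\right),
\end{equation*}
then takes expectations and uses $\Eb[\psi_k]=1$ and $\Eb[(\psi_k-1)^2]=\mathrm{var}(\psi_k)=\mathrm{var}(|g_{kk}|^2)/\Eb^2[|g_{kk}|^2]$, so that the linear term vanishes and the quadratic term produces exactly the variance penalty, with the inequality direction preserved throughout. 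That single pointwise inequality is what upgrades your ``$\approx$'' to the claimed ``$\ge$''; without it (or a substitute), your argument establishes an approximation of the right-hand side of \eqref{eq:lower_Rk}, not the stated bound, so as a proof of the theorem it has a genuine gap.

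That said, your instinct that this step is delicate is correct, and it in fact cuts against the paper as well: the function $f(x)=\ln x-(x-1)+\tfrac{1}{2}(x-1)^2$ satisfies $f'(x)=(x-1)^2/x\ge 0$ and $f(1)=0$, so $\ln x\ge (x-1)-\tfrac{1}{2}(x-1)^2$ holds only for $x\ge 1$ and reverses on $(0,1)$. Since $\psi_k<1$ occurs with positive probability, the paper's pointwise inequality is invalid on that event, and the expectation bound does not follow from it; making the lower bound rigorous genuinely requires a remainder/concentration argument of the kind you describe, or a distribution-specific computation. So the fair summary is: you are missing the one-line inequality the paper uses to close the proof, but the paper's use of that inequality is itself flawed in precisely the way your caution anticipates, and your explicitly flagged delta-method presentation is, if anything, the more honest account of what is actually being established.
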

\begin{proof}
We can write,
\begin{multline}
R_k\ge \left[\E{\log_2\left(\frac{\lvert g_{kk}\rvert^2\mathcal{E}_{s,k}}{\sum\limits_{\substack{l=1\\l\neq k}}^K\E{|g_{kl}|^2}\mathcal{E}_{s,l}+\text{var}(z_k)}\right)}\right]^+\\=\left[\E{\log_2\left({\lvert g_{kk}\rvert^2\mathcal{E}_{s,k}}\right)-\log_2\left({\sum\limits_{\substack{l=1\\l\neq k}}^K\E{|g_{kl}|^2}\mathcal{E}_{s,l}+\text{var}(z_k)}\right)}\right]^+\\ \ge \left[\Eb\left[\log_2\left({\lvert g_{kk}\rvert^2\mathcal{E}_{s,k}}\right)\right]-\log_2\left({\sum\limits_{\substack{l=1\\l\neq k}}^K\Eb\left[|g_{kl}|^2\right]\mathcal{E}_{s,l}+\text{var}(z_k)}\right)\right]^+. 
\label{eq:lower_Rk1}
\end{multline} 
Defining $\psi_k\triangleq\frac{|g_{kk}|^2}{\Eb[|g_{kk}|^2]}$, we get
\begin{equation}
\Eb\left[\log_2\left({\lvert g_{kk}\rvert^2\mathcal{E}_{s,k}}\right)\right]=\Eb\left[\log_2\left({\Eb[\lvert g_{kk}\rvert^2]\mathcal{E}_{s,k}}\psi_k\right)\right]=\log_2\left({\Eb[\lvert g_{kk}\rvert^2]\mathcal{E}_{s,k}}\right)+\Eb\left[\log_2\left(\psi_k\right)\right].
\end{equation}
Since, 
$$\log_2(\psi_k) \ge \log_2(e) \left((\psi_k-1)-\frac{(\psi_k-1)^2}{2}\right),$$ and 
$\Eb[\psi_k]=1$, $\Eb[(\psi_k-1)^2]=\text{var}(\psi_k)$, we obtain~\eqref{eq:lower_Rk}. The variance of $\lvert g_{kk} \rvert^2$ is given in Appendix~\ref{App:A}.
\end{proof}
{Similarly, under estimated CSI, the combined signal at the $m$th AP, denoted as $\hat{r}_{mk}$ becomes,}
{\begin{multline}
\hat{r}_{mk}=\mathbf{\hat{h}}_{mk}^H \mathbf{y}_m=\mathbf{\hat{h}}_{mk}^H\sum_{l=1}^{K}\mathbf{h}_{ml}s_l+\sqrt{N}_0\mathbf{\hat{h}}_{mk}^H\mathbf{w}_m\\=\mathbf{\hat{h}}_{mk}^H\mathbf{C}_{mk}^{\frac{1}{2}}\mathbf{\hat{h}}_{mk}+\mathbf{\hat{h}}_{mk}^H\mathbf{\bar{C}}_{mk}^{\frac{1}{2}}\mathbf{\tilde{h}}_{mk}+\mathbf{\hat{h}}_{mk}^H\sum_{l=1}^{K}\mathbf{h}_{ml}s_l+\sqrt{N}_0\mathbf{\hat{h}}_{mk}^H\mathbf{w}_m \\=\hat{g}_{m,kk}s_k+\tilde{g}_{m,kk}s_k+\sum_{\substack{l=1\\l\neq k}}^{K}g_{m,kl}s_l+z_{mk},
\end{multline}}
{with all the terms again defined implicitly.}
Based on this, the accumulated signal at the CPU can be written as,
{\begin{equation}
\hat{r}_k=\sum_{m=1}^{M}\hat{r}_{mk}=\hat{g}_{kk}s_k+\tilde{g}_{kk}s_k+\sum_{\substack{l=1\\l\neq k}}^Kg_{kl}s_l+z_k. 
\label{eq:sig_est_CSI}
\end{equation}}

From~\eqref{eq:sig_est_CSI}, treating interference as noise, we obtain the achievable rate for the $k$th UE as
\begin{equation}
R_k=\Eb\left[\log_2\left(1+\frac{\lvert \hat{g}_{kk}\rvert^2\mathcal{E}_{s,k}}{ \E{\lvert\tilde{g}_{kk}\rvert^2}\mathcal{E}_{s,k}+\sum\limits_{\substack{l=1\\l\neq k}}^K\E{|g_{kl}|^2}\mathcal{E}_{s,l}+\text{var}(z_k)}\right)\right]. 
\label{eq:Rk_est}
\end{equation}

\begin{lemma}\normalfont
	The rate achievable by the $k$th user under estimated CSI at the APs can be upper bounded as
	\begin{equation}
	R_k\le\log_2\left(1+\frac{\Eb\left[\lvert \hat{g}_{kk}\rvert^2\right]\mathcal{E}_{s,k}}{\Eb\left[\lvert \tilde{g}_{kk}\rvert^2\right]\mathcal{E}_{s,k}+\sum\limits_{\substack{l=1\\l\neq k}}^K\Eb\left[|g_{kl}|^2\right]\mathcal{E}_{s,l}+\text{var}(z_k)}\right). 
	\label{eq:upper_Rk_est}
	\end{equation}
where the statistics of $\hat{g}_{kk}$ and $\tilde{g}_{kk}$, are derived in Appendix~\ref{App:C}.
	
\end{lemma}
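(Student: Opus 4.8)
The plan is to reproduce, for the estimated-CSI case, the exact argument used in the accurate-CSI lemma: bound the ergodic rate in \eqref{eq:Rk_est} by pulling the expectation inside the logarithm through Jensen's inequality. The crucial structural observation is that, in the SINR appearing inside the log of \eqref{eq:Rk_est}, every term in the denominator --- namely $\Eb[\lvert\tilde{g}_{kk}\rvert^2]$, the interference terms $\Eb[|g_{kl}|^2]$, and $\text{var}(z_k)$ --- is itself an expectation, hence a deterministic constant. The only residual randomness inside the logarithm is therefore the signal power $\lvert\hat{g}_{kk}\rvert^2$ in the numerator, which is what makes the one-line Jensen step work.

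Concretely, I would write the SINR as $c\,\lvert\hat{g}_{kk}\rvert^2$, where $c \triangleq \mathcal{E}_{s,k}\big/\big(\Eb[\lvert\tilde{g}_{kk}\rvert^2]\mathcal{E}_{s,k}+\sum_{l\neq k}\Eb[|g_{kl}|^2]\mathcal{E}_{s,l}+\text{var}(z_k)\big)$ is a fixed nonnegative constant. Since $x\mapsto\log_2(1+cx)$ is concave on $x\ge 0$, Jensen's inequality yields $\Eb[\log_2(1+c\lvert\hat{g}_{kk}\rvert^2)]\le\log_2(1+c\,\Eb[\lvert\hat{g}_{kk}\rvert^2])$, and substituting $c$ back recovers exactly \eqref{eq:upper_Rk_est}. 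This is the same mechanism as in the accurate-CSI lemma, only with the extra self-interference term $\Eb[\lvert\tilde{g}_{kk}\rvert^2]\mathcal{E}_{s,k}$ now appearing in the denominator to account for the channel estimation error.

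The substantive work is therefore not the inequality but the evaluation of the two second moments $\Eb[\lvert\hat{g}_{kk}\rvert^2]$ and $\Eb[\lvert\tilde{g}_{kk}\rvert^2]$ under LMMSE estimation, which I would relegate to Appendix~\ref{App:C}. These follow from the decomposition \eqref{eq:ch_est}, the estimator/error orthogonality $E[\mathbf{\tilde{h}}_{mk}\mathbf{\hat{h}}^H_{mk}]=\mathbf{O}_N$, and the covariance matrices $\mathbf{C}_{mk}$ and $\mathbf{\bar{C}}_{mk}$. The interference moments $\Eb[|g_{kl}|^2]$ and the noise variance $\text{var}(z_k)$ carry over essentially unchanged from Appendices~\ref{App:A} and~\ref{App:B}, since conjugate beamforming with $\mathbf{\hat{h}}_{mk}$ in place of $\mathbf{h}_{mk}$ leaves their structure intact up to replacing $\bm{\Sigma}_{hh,mk}$ by $\mathbf{C}_{mk}$.

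The main obstacle I anticipate lies precisely in the moment bookkeeping of Appendix~\ref{App:C}. The estimated gain $\hat{g}_{kk}=\sum_m \mathbf{\hat{h}}_{mk}^H\mathbf{C}_{mk}^{1/2}\mathbf{\hat{h}}_{mk}$ couples the rank-one LoS contribution and the scaled-identity NLoS contribution inside $\bm{\Sigma}_{hh,mk}$ through the LMMSE gain $\bm{\Sigma}^{-1}_{y'y',mk}$, so one must average carefully over both the fast-fading vectors and the Bernoulli LoS indicators $\delta_{mk}$. The delicate part is tracking the cross-AP terms that survive coherently through the common LoS phases $e^{\iota 2\pi x_{mk}/\lambda_c}$ while the independent NLoS contributions decorrelate, mirroring the split already seen in \eqref{eq:meangkk}--\eqref{eq:vargkl} but now propagated through the estimator covariance rather than the true covariance.
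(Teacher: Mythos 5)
Your proof is correct and matches the paper's approach exactly: the paper's proof is the one-line application of Jensen's inequality to~\eqref{eq:Rk_est}, which works precisely because, as you note, every denominator term in~\eqref{eq:Rk_est} is already a deterministic expectation, leaving $\lvert\hat{g}_{kk}\rvert^2$ as the sole random quantity inside the concave map $x\mapsto\log_2(1+cx)$. Your additional remarks on the moment bookkeeping correctly identify that this material belongs to Appendix~\ref{App:C} rather than to the proof of the inequality itself, just as in the paper.
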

\begin{proof}
	Applying Jensen's inequality to~\eqref{eq:Rk_est} results in~\eqref{eq:upper_Rk_est}.  
\end{proof}

\begin{theorem}\normalfont
The rate achievable by the $k$th UE in the uplink can be given as,
	\begin{equation}
	R_k^{\text{est}}\ge  \left[ \log_2\left({\Eb\left[\lvert \hat{g}_{kk}\rvert^2\right]\mathcal{E}_{s,k}}\right)-\log_2(e)\frac{\text{var}(|\hat{g}_{kk}|^2)}{2\Eb^2[|\hat{g}_{kk}|^2]}-\log_2\left(\Eb\left[\lvert \tilde{g}_{kk}\rvert^2\right]\mathcal{E}_{s,k}+{\sum\limits_{\substack{l=1\\l\neq k}}^K\Eb\left[|g_{kl}|^2\right]\mathcal{E}_{s,l}+\text{var}(z_k)}\right) \right]^+,
	\label{eq:lower_Rk_est}
	\end{equation}
\end{theorem}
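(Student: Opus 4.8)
The plan is to mirror the argument used for the accurate-CSI lower bound~\eqref{eq:lower_Rk}, since the starting expression~\eqref{eq:Rk_est} has exactly the same structure: a single random quantity $\lvert\hat{g}_{kk}\rvert^2$ in the numerator sitting above a fully deterministic denominator. First I would drop the $1$ inside the logarithm via $\log_2(1+x)\ge\log_2(x)$ together with the $[\cdot]^+$ operation, yielding
\begin{equation}
R_k^{\text{est}}\ge\left[\Eb\left[\log_2\left(\frac{\lvert\hat{g}_{kk}\rvert^2\mathcal{E}_{s,k}}{\Eb\left[\lvert\tilde{g}_{kk}\rvert^2\right]\mathcal{E}_{s,k}+\sum\limits_{\substack{l=1\\l\neq k}}^K\Eb\left[|g_{kl}|^2\right]\mathcal{E}_{s,l}+\var(z_k)}\right)\right]\right]^+.
\end{equation}
Because every term in the denominator is already an expectation (the estimation-error power $\Eb[\lvert\tilde{g}_{kk}\rvert^2]$, the inter-user leakage, and $\var(z_k)$), it is a constant and can be pulled out of the outer expectation after splitting the logarithm of the ratio into a difference, leaving only $\Eb[\log_2(\lvert\hat{g}_{kk}\rvert^2\mathcal{E}_{s,k})]$ to be bounded.

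Next I would normalize this remaining random term exactly as in the earlier proof: define $\psi_k\triangleq\lvert\hat{g}_{kk}\rvert^2/\Eb[\lvert\hat{g}_{kk}\rvert^2]$, so that
\begin{equation}
\Eb\left[\log_2\left(\lvert\hat{g}_{kk}\rvert^2\mathcal{E}_{s,k}\right)\right]=\log_2\left(\Eb[\lvert\hat{g}_{kk}\rvert^2]\mathcal{E}_{s,k}\right)+\Eb\left[\log_2(\psi_k)\right].
\end{equation}
I would then apply the second-order logarithmic lower bound $\log_2(\psi_k)\ge\log_2(e)\bigl((\psi_k-1)-\tfrac{1}{2}(\psi_k-1)^2\bigr)$ and take expectations, using $\Eb[\psi_k]=1$ and $\Eb[(\psi_k-1)^2]=\var(\psi_k)=\var(\lvert\hat{g}_{kk}\rvert^2)/\Eb^2[\lvert\hat{g}_{kk}\rvert^2]$. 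This converts the stochastic log-term into the penalty $-\log_2(e)\,\var(\lvert\hat{g}_{kk}\rvert^2)/(2\Eb^2[\lvert\hat{g}_{kk}\rvert^2])$, and assembling the three resulting pieces reproduces~\eqref{eq:lower_Rk_est}.

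Conceptually there is no genuinely new obstacle relative to the accurate-CSI case; the only point requiring care is the justification for why the estimation-error contribution $\tilde{g}_{kk}s_k$ enters as the deterministic noise power $\Eb[\lvert\tilde{g}_{kk}\rvert^2]$ in the denominator rather than as an additional random term inside the logarithm. This rests on the LMMSE decomposition~\eqref{eq:ch_est}, where the orthogonality $\E{\mathbf{\tilde{h}}_{mk}\mathbf{\hat{h}}^H_{mk}}=\mathbf{O}_N$ renders $\tilde{g}_{kk}$ uncorrelated with the beamforming vector, so that the worst-case-uncorrelated-noise (\emph{use-and-forget}) treatment legitimately replaces it by its second moment already in~\eqref{eq:Rk_est}. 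The moments $\Eb[\lvert\hat{g}_{kk}\rvert^2]$, $\var(\lvert\hat{g}_{kk}\rvert^2)$ and $\Eb[\lvert\tilde{g}_{kk}\rvert^2]$ are the only quantities left to evaluate, and these are supplied by Appendix~\ref{App:C}; once they are in hand the bound follows by direct substitution.
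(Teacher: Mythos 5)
Your proposal is correct and follows essentially the same route as the paper: drop the $1$ inside the logarithm, split off the deterministic denominator (which already consists of the expectations $\Eb[\lvert\tilde{g}_{kk}\rvert^2]\mathcal{E}_{s,k}$, $\sum_{l\neq k}\Eb[|g_{kl}|^2]\mathcal{E}_{s,l}$, and $\text{var}(z_k)$), then normalize via $\hat{\psi}_k\triangleq|\hat{g}_{kk}|^2/\Eb[|\hat{g}_{kk}|^2]$ and apply the second-order bound $\log_2(\hat{\psi}_k)\ge\log_2(e)\bigl((\hat{\psi}_k-1)-\tfrac{1}{2}(\hat{\psi}_k-1)^2\bigr)$ exactly as in Theorem~1, to which the paper itself defers. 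Your added justification that the orthogonality of the LMMSE decomposition is what legitimizes treating $\tilde{g}_{kk}s_k$ as deterministic noise power in the denominator is a point the paper leaves implicit, but it does not change the argument.
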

\begin{proof}
	We can write,
	\begin{multline}
	R_k\ge \left[\Eb\left[\log_2\left(\frac{\lvert \hat{g}_{kk}\rvert^2\mathcal{E}_{s,k}}{\Eb\left[\lvert \tilde{g}_{kk}\rvert^2\right]\mathcal{E}_{s,k}+\sum\limits_{\substack{l=1\\l\neq k}}^K\E{|g_{kl}|^2}\mathcal{E}_{s,l}+\text{var}(z_k)}\right)\right]\right]^+
	\\
	=\left[\Eb\left[\log_2\left({\lvert \hat{g}_{kk}\rvert^2\mathcal{E}_{s,k}}\right)-\log_2\left(\Eb\left[\lvert \tilde{g}_{kk}\rvert^2\right]\mathcal{E}_{s,k}+{\sum\limits_{\substack{l=1\\l\neq k}}^K\E{|g_{kl}|^2}\mathcal{E}_{s,l}+\text{var}(z_k)}\right)\right] \right]^+
	\\
	 \ge \left[ \Eb\left[\log_2\left({\lvert \hat{g}_{kk}\rvert^2\mathcal{E}_{s,k}}\right)\right]-\log_2\left(\Eb\left[\lvert \tilde{g}_{kk}\rvert^2\right]\mathcal{E}_{s,k}+{\sum\limits_{\substack{l=1\\l\neq k}}^K\Eb\left[|g_{kl}|^2\right]\mathcal{E}_{s,l}+\text{var}(z_k)}\right)\right]^+. 
	\label{eq:lower_Rk1_est}
	\end{multline} 
	We prove this by defining $\hat{\psi}_k\triangleq\frac{|\hat{g}_{kk}|^2}{\Eb[|\hat{g}_{kk}|^2]}$, and following a procedure similar to Theorem~1.
\end{proof}
\subsection{Computational Complexity Analysis}
With the users transmitting from the finite constellation $\mathcal{S}$, and under the availability of accurate CSI at the CPU, an estimate $\hat{s}_k$ of the symbol $s_k$ transmitted by the $k$th UE can be obtained as,
\begin{equation}
\hat{s}_k=\arg\min_{s\in\mathcal{S}}|r_k-g_{kk}s|.
\end{equation}
For soft symbol decoding, the probability of the $i$th constellation symbol $s_{k,i}\in\mathcal{S}$ being transmitted by the $k$th UE is given as,
\begin{equation}
\Pr\{s_{k,i}|r_k\}=\frac{1}{\pi N_0}\text{exp}\left(-\frac{|r_k-g_{kk}s_{k,i}|^2}{\sum\limits_{\substack{l=1\\l\neq k}}^K\Eb\left[|g_{kl}|^2\right]\mathcal{E}_{s,l}+\text{var}(z_k)}\right).
\end{equation} 

{Similarly, in case of estimated CSI, the hard estimate, and the soft symbol probability can be respectively written as,}
\begin{equation}
\hat{s}_k=\arg\min_{s\in\mathcal{S}}|r_k-\hat{g}_{kk}s|,
\end{equation}
and
{\begin{equation}
\text{Pr}^{\text{est}}\{s_{k,i}|r_k\}=\frac{1}{\pi N_0}\text{exp}\left(-\frac{|r_k-\hat{g}_{kk}s_{k,i}|^2}{\E{|\tilde{g}_{kk}|^2}\mathcal{E}_{s,k}+\sum\limits_{\substack{l=1\\l\neq k}}^K\Eb\left[|g_{kl}|^2\right]\mathcal{E}_{s,l}+\text{var}(z_k)}\right).
\end{equation}}

It can be observed that for $K$ users, and a constellation size $|\mathcal{S}|$, the computational complexity for hard symbol decoding is $\mathcal{O}(K|\mathcal{S}|)$. 


\section{Performance Analysis with Joint Detection}\label{sec:joint}
In case of joint detection, we assume that CSI for all the channels is available with the CPU. In this case, the APs forward their received signals to the CPU, so that the concatenated symbol at the CPU is given by
\begin{equation}
\mathbf{y}=\mathbf{H}\mathbf{s}+\sqrt{N_0}\mathbf{w},
\label{eq:sig_central}
\end{equation}
where $\mathbf{H}=[\mathbf{H}_1^T,\mathbf{H}_2^T,\ldots,\mathbf{H}_M^T]^T$, and $\mathbf{w}=[\mathbf{w}_1^T,\mathbf{w}_2^T,\ldots,\mathbf{w}_M^T]^T$.
Therefore, the achievable sum rate for joint detection becomes, 
\begin{equation}
R=\Eb_{\mathbf{H}}\left[\log_2\left(\det \left( \mathbf{I}_{MN}+\mathbf{H}\text{diag}(\bm{\mathcal{E}}_s/N_0)\mathbf{H}^H \right) \right)\right].
\label{eq:R_joint}
\end{equation}
\begin{lemma}\normalfont
The sum rate for the CF massive MIMO system with LoS/NLoS channels under the availability of accurate CSI, and joint detection at the CPU, is upper bounded as 
\begin{equation}
R\le\log_2\left(\det \left( \mathbf{I}_{MN}+\text{diag}(\bm{\mathcal{E}}_s/N_0)\Eb\left[\mathbf{G}\right] \right) \right),
\label{eq:T5}
\end{equation}
with the $k$th diagonal entry of the matrix $\Eb[\mathbf{G}]$ being
\begin{equation}
\Eb[g_{kk}]=N\sum_{m=1}^M\left(P_{mk}G_kG_m\left(\frac{\ell_{k}^{'}\ell_m}{4\pi x_{mk}}\right)^2+\beta_{mk}\right), 
\end{equation}
and its $(k,l)$th off diagonal entry being,
\begin{equation}
\Eb[g_{kl}]=\sum_{m=1}^MP_{ml}P_{mk}G_m\sqrt{G_kG_l}\left(\frac{\ell_{k}^{'}\ell_l\ell_m^2}{16\pi^2 x_{mk}x_{ml}}\right)e^{\iota\frac{2\pi}{\lambda_c}(x_{mk}-x_{ml})}\sum_{i=1}^N e^{\iota \frac{2\pi d}{\lambda_c}i(\sin(\theta_{mk})-sin(\theta_{ml}))}.
\end{equation}
\end{lemma}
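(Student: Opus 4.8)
The plan is to bound \eqref{eq:R_joint} in two moves: a deterministic determinant manipulation applied realization-by-realization, followed by a single application of Jensen's inequality. Writing $\mathbf{D}\triangleq\text{diag}(\bm{\mathcal{E}}_s/N_0)$ for the deterministic $K\times K$ power matrix, the first move is Sylvester's determinant identity, $\det(\mathbf{I}_{MN}+\mathbf{H}\mathbf{D}\mathbf{H}^H)=\det(\mathbf{I}_K+\mathbf{D}\mathbf{H}^H\mathbf{H})$, which holds for every channel realization and collapses the $MN\times MN$ determinant to a $K\times K$ one. Defining $\mathbf{G}\triangleq\mathbf{H}^H\mathbf{H}$, whose $(k,l)$th entry is $g_{kl}=\mathbf{h}_k^H\mathbf{h}_l=\sum_{m=1}^M\mathbf{h}_{mk}^H\mathbf{h}_{ml}$ with $\mathbf{h}_k$ the stacked channel of the $k$th UE across all APs, this rewrites the rate as $R=\Eb[\log_2\det(\mathbf{I}_K+\mathbf{D}\mathbf{G})]$.

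The second move exploits the concavity of the log-determinant. Since $\det(\mathbf{I}_K+\mathbf{D}\mathbf{G})=\det(\mathbf{I}_K+\mathbf{D}^{1/2}\mathbf{G}\mathbf{D}^{1/2})$ and $\mathbf{D}^{1/2}\mathbf{G}\mathbf{D}^{1/2}$ is Hermitian positive semidefinite, the map $\mathbf{A}\mapsto\log_2\det(\mathbf{I}_K+\mathbf{A})$ is concave over the cone of such matrices. Applying the matrix form of Jensen's inequality and using that $\mathbf{D}$ is deterministic yields $R\le\log_2\det(\mathbf{I}_K+\mathbf{D}^{1/2}\Eb[\mathbf{G}]\mathbf{D}^{1/2})=\log_2\det(\mathbf{I}_K+\mathbf{D}\Eb[\mathbf{G}])$, which is \eqref{eq:T5}. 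I note that the subscript $MN$ on the identity in \eqref{eq:T5} is to be read as $K$, since $\Eb[\mathbf{G}]$ is the $K\times K$ Gram matrix; by Sylvester's identity the determinant value is the same either way.

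The third step is the entrywise evaluation of $\Eb[\mathbf{G}]$. Substituting the channel decomposition \eqref{eq:channelcombo}, I expand $\Eb[\mathbf{h}_{mk}^H\mathbf{h}_{ml}]$ into four terms and use the modelling assumptions: the NLoS fading $\dot{\mathbf{h}}_{mk}$ is zero-mean with $\Eb[\dot{\mathbf{h}}_{mk}\dot{\mathbf{h}}_{mk}^H]=\mathbf{I}_N$ and independent across UEs, while the LoS indicators satisfy $\Eb[\delta_{mk}]=\Eb[\delta_{mk}^2]=P_{mk}$ and are independent for $k\neq l$. On the diagonal ($l=k$) the cross terms vanish and $\Eb[g_{kk}]=\sum_m(P_{mk}\|\bar{\mathbf{h}}_{mk}\|^2+\beta_{mk}\Eb[\|\dot{\mathbf{h}}_{mk}\|^2])$; using $\|\mathbf{a}(\theta_{mk})\|^2=N$ gives the stated diagonal entry. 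Off the diagonal ($l\neq k$), both the NLoS--NLoS inner product and the LoS--NLoS cross terms have zero mean, leaving only $\Eb[\delta_{mk}\delta_{ml}]\,\bar{\mathbf{h}}_{mk}^H\bar{\mathbf{h}}_{ml}=P_{mk}P_{ml}\,\bar{\mathbf{h}}_{mk}^H\bar{\mathbf{h}}_{ml}$, and writing out $\bar{\mathbf{h}}_{mk}^H\bar{\mathbf{h}}_{ml}$ with the AoA vectors from \eqref{eq:aoavec} produces the phase sum $\sum_{i=1}^N e^{\iota\frac{2\pi d}{\lambda_c}i(\sin\theta_{mk}-\sin\theta_{ml})}$ and reproduces the stated off-diagonal entry.

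I expect the main obstacle to be ordering the two reductions correctly: Jensen's inequality must be applied \emph{after} Sylvester's identity, because pulling the expectation inside the $MN\times MN$ determinant first would destroy the low-rank factorization $\mathbf{H}\mathbf{D}\mathbf{H}^H$ and block the collapse to the $K\times K$ Gram matrix $\Eb[\mathbf{G}]$. A secondary technical point is the symmetrization by $\mathbf{D}^{1/2}$ needed to cast $\mathbf{D}\mathbf{G}$ as a Hermitian argument before invoking concavity; the remaining work is the bookkeeping in the entrywise expectations, which is routine once the independence structure is used.
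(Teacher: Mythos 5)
Your proposal is correct and follows essentially the same route as the paper: the determinant identity $\det(\mathbf{I}+\mathbf{A}\mathbf{B})=\det(\mathbf{I}+\mathbf{B}\mathbf{A})$ to pass to the Gram matrix $\mathbf{G}=\mathbf{H}^H\mathbf{H}$, then Jensen's inequality via concavity of the log-determinant, with the entrywise expectations of $\mathbf{G}$ computed exactly as in the paper's Appendix A. Your two refinements---the $\mathbf{D}^{1/2}$ symmetrization to make the Jensen step rigorous, and the observation that the identity matrix in the bound should be read as $\mathbf{I}_K$ rather than $\mathbf{I}_{MN}$---are both sound and only make explicit what the paper leaves implicit.
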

\begin{proof}
	Since $\det(\mathbf{I}+\mathbf{AB})=\det(\mathbf{I}+\mathbf{BA})$~\cite{Horn_MA}, we obtain,
	\begin{equation}
	R=\Eb_{\mathbf{H}}\left[\log_2\left(\det \left( \mathbf{I}_{MN}+\text{diag}(\bm{\mathcal{E}}_s/N_0)\mathbf{H}^H\mathbf{H} \right) \right)\right].
	\label{eq:upper_T5}
	\end{equation}
	Applying the Jensen's inequality to~\eqref{eq:upper_T5}, we obtain~\eqref{eq:T5}, and the statistics of $\mathbf{G}$ are derived in Appendix~\ref{App:A}.
\end{proof}
\begin{theorem}\label{thm:perfect_lower_joint}
\normalfont
The sum rate for the CF massive MIMO system with LoS/NLoS channels under the availability of accurate CSI, and joint detection at the CPU, is lower bounded as
\begin{equation}
R\ge \left[\log_2\left(\det \left( \text{diag}(\bm{\mathcal{E}}_s/N_0)\Eb[\mathbf{G}]\right) \right) +\Eb\left[\log_2(\det(\bm{\Psi})) \right] \right]^+,
\label{eq:T6}
\end{equation} 
where $\bm{\Psi}=(\Eb[\mathbf{G}])^{-1}\mathbf{G}$. Here, $\Eb\left[\log_2(\det(\bm{\Psi})) \right]$, needs to be computed empirically.
\end{theorem}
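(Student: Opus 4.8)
The plan is to begin from the determinant identity $\det(\mathbf{I}+\mathbf{AB})=\det(\mathbf{I}+\mathbf{BA})$ already invoked in the proof of the preceding lemma, which rewrites the sum rate in its $K\times K$ form as
\begin{equation}
R=\Eb\left[\log_2\left(\det\left(\mathbf{I}_K+\diag(\bm{\mathcal{E}}_s/N_0)\mathbf{G}\right)\right)\right],
\end{equation}
where $\mathbf{G}\triangleq\mathbf{H}^H\mathbf{H}$ is the Gram matrix whose entries $g_{kl}$ were defined earlier. The central idea is to lower bound the integrand by discarding the identity matrix inside the determinant, and then to peel off the deterministic factor $\Eb[\mathbf{G}]$ from the remaining random determinant so that the residual randomness is isolated in $\bm{\Psi}$.

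First I would establish the inequality $\det(\mathbf{I}_K+\diag(\bm{\mathcal{E}}_s/N_0)\mathbf{G})\ge\det(\diag(\bm{\mathcal{E}}_s/N_0)\mathbf{G})$. Since $\mathbf{G}$ is Hermitian positive semidefinite and $\diag(\bm{\mathcal{E}}_s/N_0)$ is a positive diagonal matrix, the product $\diag(\bm{\mathcal{E}}_s/N_0)\mathbf{G}$ is similar to the Hermitian positive semidefinite matrix $\diag(\bm{\mathcal{E}}_s/N_0)^{1/2}\mathbf{G}\,\diag(\bm{\mathcal{E}}_s/N_0)^{1/2}$ and hence has real nonnegative eigenvalues $\lambda_i$. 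The claim then follows from the elementary bound $\prod_i(1+\lambda_i)\ge\prod_i\lambda_i$, which holds term by term. Next I would factor $\mathbf{G}=\Eb[\mathbf{G}]\,\bm{\Psi}$ with $\bm{\Psi}=(\Eb[\mathbf{G}])^{-1}\mathbf{G}$, legitimate because $\Eb[\mathbf{G}]$ is Hermitian positive definite (its diagonal entries are the strictly positive $\Eb[g_{kk}]$ computed in the lemma) and therefore invertible. Multiplicativity of the determinant then gives
\begin{equation}
\det\left(\diag(\bm{\mathcal{E}}_s/N_0)\mathbf{G}\right)=\det\left(\diag(\bm{\mathcal{E}}_s/N_0)\Eb[\mathbf{G}]\right)\det(\bm{\Psi}),
\end{equation}
and applying $\log_2$ separates a deterministic term from $\log_2(\det(\bm{\Psi}))$. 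Taking expectation over $\mathbf{H}$ leaves the deterministic term untouched and produces $\Eb[\log_2(\det(\bm{\Psi}))]$; finally, using that the rate is nonnegative yields the $[\cdot]^+$ in the stated bound.

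The main obstacle I anticipate is making the eigenvalue argument rigorous for the \emph{non-Hermitian} product $\diag(\bm{\mathcal{E}}_s/N_0)\mathbf{G}$: the similarity transform to the Hermitian matrix $\diag(\bm{\mathcal{E}}_s/N_0)^{1/2}\mathbf{G}\,\diag(\bm{\mathcal{E}}_s/N_0)^{1/2}$ is precisely what guarantees real nonnegative eigenvalues and thus validates the $\prod_i(1+\lambda_i)\ge\prod_i\lambda_i$ step, so it is worth stating explicitly rather than treating it as obvious. A secondary point requiring care is the invertibility of $\Eb[\mathbf{G}]$, needed for $\bm{\Psi}$ to be well defined; this is ensured by the positive definiteness of the expected Gram matrix. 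The remaining manipulations—multiplicativity of the determinant and linearity of expectation—are routine.
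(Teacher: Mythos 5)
Your proposal is correct and takes essentially the same route as the paper's own proof: the paper likewise passes to the $K\times K$ form via $\det(\mathbf{I}+\mathbf{AB})=\det(\mathbf{I}+\mathbf{BA})$, drops the identity to get $R\ge\left[\Eb\left[\log_2\det\left(\diag(\bm{\mathcal{E}}_s/N_0)\mathbf{H}^H\mathbf{H}\right)\right]\right]^+$, factors $\mathbf{G}=\Eb[\mathbf{G}]\bm{\Psi}$, and uses multiplicativity of the determinant, so your eigenvalue argument merely fills in the step the paper calls ``easy to show.'' One small caution: strictly positive diagonal entries do not by themselves make $\Eb[\mathbf{G}]$ positive definite; instead, observe that $\Eb[\mathbf{G}]=\Eb[\mathbf{H}^H\mathbf{H}]$ is the sum of a positive semidefinite (LoS) term and the NLoS contribution, which is a strictly positive diagonal matrix, so $\Eb[\mathbf{G}]$ is indeed invertible and $\bm{\Psi}$ is well defined.
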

\begin{proof}
From~\eqref{eq:R_joint}, it is easy to show that
\begin{equation}
R\ge \left[\Eb\left[\log_2\left(\det \left( \text{diag}(\bm{\mathcal{E}}_s/N_0)\mathbf{H}^H\mathbf{H} \right) \right)\right]\right]^+,
\end{equation}
Letting $\mathbf{G}=\mathbf{H}^H\mathbf{H}$, and $\bm{\Psi}=(\Eb[\mathbf{G}])^{-1}\mathbf{G}$, we have
\begin{equation}
R\ge \left[ \Eb \left[\log_2\left(\det \left( \text{diag}(\bm{\mathcal{E}}_s/N_0)\Eb[\mathbf{G}]\right) +\log_2(\det(\bm{\Psi}))  \right)\right]\right]^+,
\end{equation}
this can be trivially reduced to~\eqref{eq:T6}. 

We note that the entries of $\bm{\Psi}$ come from different distributions, and it is not possible obtain the distribution of $\log_2(\det(\bm{\Psi}))$ in a closed form. This term therefore needs to be evaluated empirically using Monte Carlo simulations.
\end{proof}
In case the value of $\log_2(\det(\bm{\Psi}))$  is small, we can approximate the achievable sum rate as, 
\begin{equation}
R\approx \left[ \log_2\left(\det \left( \text{diag}(\bm{\mathcal{E}}_s/N_0)\Eb[\mathbf{G}]\right) \right) \right]^+.
\end{equation}
\noindent
Now, if the estimated CSI is available at the CPU, the concatenated received signal in~\eqref{eq:sig_central} can be rewritten using~\eqref{eq:ch_est} as follows
\begin{equation}
\mathbf{y}=\mathbf{\hat{H}}\mathbf{s}+\mathbf{\tilde{H}}\mathbf{s}+\sqrt{N}_0\mathbf{w}.
\label{eq:sig_cent_est}
\end{equation}
Here  $\mathbf{\hat{H}}=[\mathbf{\hat{h}}_1,\mathbf{\hat{h}}_2,\ldots,\mathbf{\hat{h}}_K]$, and,  $\mathbf{\tilde{H}}=[\mathbf{\tilde{h}}_1,\mathbf{\tilde{h}}_2,\ldots,\mathbf{\tilde{h}}_K]$, where, \\$\mathbf{\hat{h}}_k=[(\mathbf{C}^{\frac{1}{2}}_{1k}\mathbf{\hat{h}}_{1k})^T,(\mathbf{C}^{\frac{1}{2}}_{2k}\mathbf{\hat{h}}_{2k})^T,\ldots,(\mathbf{C}^{\frac{1}{2}}_{Mk}\mathbf{\hat{h}}_{Mk})^T]^T$, and $\mathbf{\tilde{h}}_k=[(\mathbf{\bar{C}}^{\frac{1}{2}}_{mk}\mathbf{\hat{h}}_{1k})^T,(\mathbf{C}^{\frac{1}{2}}_{mk}\mathbf{\hat{h}}_{2k})^T,\ldots,(\mathbf{C}^{\frac{1}{2}}_{mk}\mathbf{\hat{h}}_{1k})^T]^T$ respectively.
From~\eqref{eq:sig_cent_est}, treating interference as noise, we can write the achievable sum rate for all the users as,
\begin{equation}
R=\Eb_{\mathbf{\hat{H}}}\left[\log_2\left(\det \left( \mathbf{I}_{MN}+\mathbf{\hat{H}}\text{diag}(\bm{\mathcal{E}}_s/\sigma_y^2)\mathbf{\hat{H}}^H \right) \right)\right],
\label{eq:R_joint_est}
\end{equation}
where $\sigma_y^2$ is the total noise and interference power, and is given by,
\begin{equation}
\sigma_y^2=\sum_{k=1}^K\sum_{m=1}^M \mathcal{E}_{s,k} \Eb[\lVert \bar{\mathbf{C}}^{\frac{1}{2}}_{mk}\tilde{\mathbf{h}}_{mk} \rVert^2]+N_0.
\end{equation}
\begin{lemma}\normalfont
	The sum rate for the CF massive MIMO system with  LoS/NLOS channels under estimated CSI and joint data detection is upper bounded as,
	\begin{equation}
	R\le\log_2\left(\det \left( \mathbf{I}_{MN}+ (\bm{\mathcal{E}}_s/\sigma_y^2)\Eb[\mathbf{\hat{G}}] \right) \right),
	\label{eq:upper_R_joint_est}
	\end{equation}
\end{lemma}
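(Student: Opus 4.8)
The plan is to reuse, essentially unchanged, the two-step argument that established the accurate-CSI joint-detection upper bound via~\eqref{eq:upper_T5}, since the estimated-CSI rate in~\eqref{eq:R_joint_est} has exactly the same algebraic shape with $\mathbf{H}$ replaced by $\mathbf{\hat{H}}$ and $N_0$ replaced by $\sigma_y^2$. First I would apply the Sylvester determinant identity $\det(\mathbf{I}+\mathbf{AB})=\det(\mathbf{I}+\mathbf{BA})$~\cite{Horn_MA} with $\mathbf{A}=\mathbf{\hat{H}}$ and $\mathbf{B}=\text{diag}(\bm{\mathcal{E}}_s/\sigma_y^2)\mathbf{\hat{H}}^H$ to collapse the $MN\times MN$ determinant to the $K\times K$ form
\begin{equation}
R=\Eb_{\mathbf{\hat{H}}}\left[\log_2\left(\det\left(\mathbf{I}_K+\text{diag}(\bm{\mathcal{E}}_s/\sigma_y^2)\mathbf{\hat{H}}^H\mathbf{\hat{H}}\right)\right)\right],
\end{equation}
and then set $\mathbf{\hat{G}}\triangleq\mathbf{\hat{H}}^H\mathbf{\hat{H}}$, so that all the randomness is concentrated in a single Gram matrix.

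Next I would bring the expectation inside the logarithm through Jensen's inequality. The enabling fact is that $\mathbf{X}\mapsto\log_2\det(\mathbf{X})$ is concave on the positive semidefinite cone; absorbing the diagonal positive power matrix symmetrically shows that $\mathbf{I}_K+\text{diag}(\bm{\mathcal{E}}_s/\sigma_y^2)^{1/2}\mathbf{\hat{G}}\,\text{diag}(\bm{\mathcal{E}}_s/\sigma_y^2)^{1/2}$ is an affine, hence expectation-preserving, function of the random PSD matrix $\mathbf{\hat{G}}$. Concavity then yields
\begin{equation}
R\le\log_2\left(\det\left(\mathbf{I}_{MN}+(\bm{\mathcal{E}}_s/\sigma_y^2)\Eb[\mathbf{\hat{G}}]\right)\right),
\end{equation}
which is precisely~\eqref{eq:upper_R_joint_est}, the determinant having been re-expanded back to its $MN$-dimensional form via the same identity. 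This reduces the entire lemma to the single first-moment computation $\Eb[\mathbf{\hat{G}}]=\Eb[\mathbf{\hat{H}}^H\mathbf{\hat{H}}]$.

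The hard part, and the only genuinely new content beyond the accurate-CSI case, will be evaluating $\Eb[\mathbf{\hat{G}}]$ under the probabilistic LoS/NLoS model. Each stacked estimate block carries the LMMSE covariance $\mathbf{C}_{mk}=\mathcal{E}_p\bm{\Sigma}_{hh,mk}\bm{\Sigma}^{-1}_{y'y',mk}\bm{\Sigma}_{hh,mk}$ from~\eqref{eq:ch_est}, which itself depends on the random LoS indicator $\delta_{mk}$ through $\bm{\Sigma}_{hh,mk}$ in~\eqref{eq:sigma_hh}. I would therefore first condition on $\{\delta_{mk}\}$, use the orthogonality $\Eb[\mathbf{\tilde{h}}_{mk}\mathbf{\hat{h}}^H_{mk}]=\mathbf{O}_N$ and the independence of the per-AP estimates to assemble the diagonal entries $\Eb[\|\mathbf{\hat{h}}_k\|^2]$ and the off-diagonal entries $\Eb[\mathbf{\hat{h}}_k^H\mathbf{\hat{h}}_l]$, and then average over the Bernoulli$(P_{mk})$ indicators so that the deterministic rank-one LoS contributions $\mathbf{a}(\theta_{mk})\mathbf{a}^H(\theta_{mk})$ acquire the LoS-probability weights $P_{mk}$. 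I expect the off-diagonal entries to again contain the coherent array-sum factor $\sum_{i=1}^N e^{\iota\frac{2\pi d}{\lambda_c}i(\sin(\theta_{mk})-\sin(\theta_{ml}))}$ seen in~\eqref{eq:vargkl}; accordingly, I would confine the lemma proof itself to the determinant identity and Jensen steps and defer the explicit closed form of $\Eb[\mathbf{\hat{G}}]$ to the appendix, as is done for the estimated-CSI statistics in Appendix~\ref{App:C}.
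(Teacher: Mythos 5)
Your proposal is correct and follows essentially the same route as the paper: the paper's proof of this lemma is simply ``apply Jensen's inequality to~\eqref{eq:R_joint_est}'' (with the determinant identity $\det(\mathbf{I}+\mathbf{AB})=\det(\mathbf{I}+\mathbf{BA})$ carried over implicitly from the accurate-CSI case), and the closed-form statistics of $\Eb[\mathbf{\hat{G}}]$ are likewise deferred to Appendix~\ref{App:C}. Your added justification of Jensen via concavity of $\log_2\det(\cdot)$ on the PSD cone, and your outline of the conditioning-on-$\delta_{mk}$ computation of $\Eb[\mathbf{\hat{G}}]$, merely flesh out steps the paper leaves terse.
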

where $\mathbf{\hat{G}}=\mathbf{\hat{H}}^H\mathbf{\hat{H}}$, whose statistics are derived in Appendix~\ref{App:C}.
\begin{proof}
	This is a direct consequence of applying the Jensen's inequality to~\eqref{eq:R_joint_est}.
\end{proof}
\begin{lemma}\normalfont
	The sum rate for the CF massive MIMO system with LoS/NLoS channels under estimated CSI and joint data detection is lower bounded as, 
	{\begin{equation}
	R\ge\left[\log_2\left(\det \left(  (\bm{\mathcal{E}}_s/\sigma_y^2)\Eb[\mathbf{\hat{G}}] \right) \right) +\Eb\left[\log_2(\det(\bm{\Psi})) \right]\right]^+.
	\label{eq:lower_R_joint_est}
	\end{equation}}
\end{lemma}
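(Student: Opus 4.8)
The plan is to mirror, step for step, the argument used for the perfect-CSI lower bound in Theorem~\ref{thm:perfect_lower_joint}, since the starting expression \eqref{eq:R_joint_est} has exactly the same determinantal structure as \eqref{eq:R_joint}: the only changes are that $\mathbf{H}$ is replaced by the estimated channel $\mathbf{\hat{H}}$ and that the noise floor $N_0$ is replaced by the aggregate noise-plus-estimation-error power $\sigma_y^2$, which is itself a deterministic (pre-averaged) quantity. First I would apply the identity $\det(\mathbf{I}+\mathbf{AB})=\det(\mathbf{I}+\mathbf{BA})$~\cite{Horn_MA} to \eqref{eq:R_joint_est}, taking $\mathbf{A}=\mathbf{\hat{H}}\diag(\bm{\mathcal{E}}_s/\sigma_y^2)$ and $\mathbf{B}=\mathbf{\hat{H}}^H$, which collapses the $MN\times MN$ determinant into the $K\times K$ determinant $\det(\mathbf{I}_K+\diag(\bm{\mathcal{E}}_s/\sigma_y^2)\mathbf{\hat{H}}^H\mathbf{\hat{H}})$.

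Next I would drop the identity term using $\det(\mathbf{I}_K+\mathbf{X})\ge\det(\mathbf{X})$, which holds whenever $\mathbf{X}$ is positive semidefinite since its eigenvalues are nonnegative. Here $\mathbf{X}=\diag(\bm{\mathcal{E}}_s/\sigma_y^2)\mathbf{\hat{G}}$ with $\mathbf{\hat{G}}=\mathbf{\hat{H}}^H\mathbf{\hat{H}}$; although this product is not symmetric, its determinant coincides with that of the congruent and genuinely PSD matrix $\diag(\bm{\mathcal{E}}_s/\sigma_y^2)^{1/2}\mathbf{\hat{G}}\diag(\bm{\mathcal{E}}_s/\sigma_y^2)^{1/2}$, so the inequality transfers to the determinant. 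Enforcing non-negativity of the rate via $[\cdot]^+$ then yields
\begin{equation}
R\ge\left[\Eb_{\mathbf{\hat{H}}}\left[\log_2\left(\det\left(\diag(\bm{\mathcal{E}}_s/\sigma_y^2)\mathbf{\hat{G}}\right)\right)\right]\right]^+.
\end{equation}

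I would then introduce $\bm{\Psi}=(\Eb[\mathbf{\hat{G}}])^{-1}\mathbf{\hat{G}}$, the estimated-CSI analogue of the $\bm{\Psi}$ appearing in Theorem~\ref{thm:perfect_lower_joint}, which is well defined because $\Eb[\mathbf{\hat{G}}]$ is positive definite and hence invertible. Factoring $\mathbf{\hat{G}}=\Eb[\mathbf{\hat{G}}]\bm{\Psi}$ inside the determinant and invoking the multiplicativity of the determinant splits the integrand as $\log_2\det(\diag(\bm{\mathcal{E}}_s/\sigma_y^2)\mathbf{\hat{G}})=\log_2\det(\diag(\bm{\mathcal{E}}_s/\sigma_y^2)\Eb[\mathbf{\hat{G}}])+\log_2\det(\bm{\Psi})$. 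Since the first term on the right is deterministic, linearity of expectation pulls it outside $\Eb[\cdot]$, leaving the random residual $\Eb[\log_2(\det(\bm{\Psi}))]$ and producing exactly \eqref{eq:lower_R_joint_est}.

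The individual steps are routine; the one point requiring care is the treatment of $\sigma_y^2$ as a constant, which is legitimate precisely because its definition is already an average and therefore non-random, so it commutes freely through the expectation over $\mathbf{\hat{H}}$. As in Theorem~\ref{thm:perfect_lower_joint}, the only genuinely non-elementary object is the residual $\Eb[\log_2(\det(\bm{\Psi}))]$: because the columns of $\mathbf{\hat{H}}$ originate from heterogeneous, non-identically distributed LoS/NLoS components, the law of $\det(\bm{\Psi})$ has no tractable closed form, and I expect this expectation to be the main obstacle, necessitating empirical evaluation by Monte Carlo simulation rather than analytic computation.
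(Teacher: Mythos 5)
Your proposal is correct and follows essentially the same route as the paper: the paper's own proof is literally ``start from \eqref{eq:R_joint_est} and repeat the procedure of Theorem~\ref{thm:perfect_lower_joint}'', which is exactly what you do (Sylvester's identity, dropping $\mathbf{I}_K$ via the PSD determinant inequality, factoring out $\Eb[\mathbf{\hat{G}}]$ to isolate $\bm{\Psi}$, and treating $\sigma_y^2$ as deterministic). Your explicit justification of the PSD step and of the redefinition of $\bm{\Psi}$ as $(\Eb[\mathbf{\hat{G}}])^{-1}\mathbf{\hat{G}}$ only makes precise what the paper leaves implicit.
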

\begin{proof}
	Starting with \eqref{eq:R_joint_est} and following a procedure similar to the proof of Theorem~\ref{thm:perfect_lower_joint}, we obtain~\eqref{eq:lower_R_joint_est}.
\end{proof}

\subsection{Computational Complexity Analysis}~\label{sec:joint_CC}
In case the UEs transmit symbols from a finite constellation $\mathcal{S}$, the overall symbol received by the CPU belongs to the constellation $\mathcal{S}^K$, and the hard estimate of the transmitted symbol vector, $\mathbf{\hat{s}}$, under accurate and estimated CSI can be respectively obtained as,
\begin{equation}
\mathbf{\hat{s}}=\arg\min_{\mathbf{s} \in \mathcal{S}^K}\lVert \mathbf{y}-\mathbf{H}\mathbf{s}\rVert^2,
\end{equation}
and
\begin{equation}
\mathbf{\hat{s}}=\arg\min_{\mathbf{s} \in \mathcal{S}^K}\lVert \mathbf{y}-\mathbf{\hat{H}}\mathbf{s}\rVert^2.
\end{equation}
Similarly, the probabilities of the symbol vector $\mathbf{s}_i\in \mathcal{S}^K$, $i \in \{1,2,\ldots,|\mathcal{S}|^K \}$, in the two cases can be respectively computed as,
\begin{equation}
\Pr\{\mathbf{s}_i|\mathbf{y}\}=\frac{1}{(\pi N_0)^K}\exp\left(- \frac{1}{N_0}\lVert \mathbf{y}-\mathbf{H}\mathbf{s}\rVert^2\right),
\end{equation}
\begin{equation}
\Pr\{\mathbf{s}_i|\mathbf{y}\}=\frac{1}{(\pi \sigma_y^2)^K}\exp\left(- \frac{1}{\sigma_y^2}\lVert \mathbf{y}-\mathbf{\hat{H}}\mathbf{s}\rVert^2\right).
\label{}
\end{equation}
Therefore, the computational complexity in this case becomes $\mathcal{O}(\lvert S \rvert^K)$.
%
%
%
\section{Performance under MMSE Data Combining at the CPU}~\label{sec:message}
We observe that in case of conjugate beamfroming and subsequent communication of the combined symbols to the CPU for streamwise decoding, the communication and computation requirements at the CPU are minimum, but the resultant inter-user interference is quite significant. 
On the other hand, using the centralized ML detection based decoding at the CPU results in higher sum rates. This, however, comes at the expense of high communication and computation costs at the CPU. We observe that the decoding complexity increases exponentially with the number of UEs~(see Section~\ref{sec:joint_CC}). Therefore, in this section, we explore the possibility of centralized MMSE based combining of the received samples followed by stream-wise decoding. In this case, the signal received by the CPU is the same as that in case of joint detection, as given in~\eqref{eq:sig_central}.

Now, in case the CPU has access to the accurate CSI, it uses the combining matrix $\mathbf{V}=(\mathbf{H}\mathbf{H}^H+N_0\mathbf{I}_{MN})^{-1}\mathbf{H}$, to generate the signal vector $\mathbf{r}=\mathbf{V}^H\mathbf{y}$, such that,
\begin{equation}
\mathbf{r}=\mathbf{H}^H(\mathbf{H}\mathbf{H}^H+N_0\mathbf{I}_{MN})^{-1}\mathbf{H}\mathbf{s}+\sqrt{N_0}\mathbf{H}^H(\mathbf{H}\mathbf{H}^H+N_0\mathbf{I}_{MN})^{-1}\mathbf{w}.
\end{equation}
Based on this, the combined signal for the data sent over the $k$th stream can be expressed as,
\begin{multline}
r_k=\mathbf{h}_{k}^H(\mathbf{H}\mathbf{H}^H+N_0\mathbf{I}_{MN})^{-1}\mathbf{h}_ks_k+\sum_{\substack{l=1\\l\neq k}}^{K}\mathbf{h}_{k}^H(\mathbf{H}\mathbf{H}^H+N_0\mathbf{I}_{MN})^{-1}\mathbf{h}_ls_l\\+\sqrt{N_0}\mathbf{h}_k^H(\mathbf{H}\mathbf{H}^H+N_0\mathbf{I}_{MN})^{-1}\mathbf{w}=f_{kk}s_k+\sum_{\substack{l=1\\l\neq k}}^{K}f_{kl}s_l+\xi_k,
\label{eq:MMSE_exact}
\end{multline}
with the effective channel coefficients $f_{kl}$, and the effective noise term $\xi_k$ defined implicitly. Based on~\eqref{eq:MMSE_exact}, and considering interference as noise, we can write the rate achievable by the $k$th UE as,
\begin{equation}
R^{\text{MMSE}}_k=\E{\log_2\left(1+\frac{\left\lvert f_{kk}\right\rvert^2\mathcal{E}_{s,k}}{\sum_{\substack{l=1\\l\neq k}}^{K}\E{\left \lvert f_{kl} \right\rvert^2}\mathcal{E}_{s,l}+ \var (\xi_k) }\right) }.
\label{eq:MMSE_comb}
\end{equation}

\begin{lemma}
\normalfont $R^{\text{MMSE}}_k$ can be upper bounded as,
\begin{equation}
R^{\text{MMSE}}_k\le\log_2\left(1+\frac{\E{\left\lvert f_{kk}\right\rvert^2}\mathcal{E}_{s,k}}{\sum_{\substack{l=1\\l\neq k}}^{K}\E{\left \lvert f_{kl} \right\rvert^2}\mathcal{E}_{s,l}+ \var(\xi_k) }\right) .
\end{equation}
\end{lemma}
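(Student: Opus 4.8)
The plan is to establish this bound exactly as in the earlier upper-bound lemmas of the paper, namely by invoking the concavity of the logarithm through Jensen's inequality applied to~\eqref{eq:MMSE_comb}. The whole argument is essentially identical in structure to the proofs of~\eqref{eq:upper_Rk} and~\eqref{eq:upper_Rk_est}, so the work lies entirely in verifying that the hypotheses of Jensen's inequality are met.

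First I would isolate the source of randomness inside the outer expectation in~\eqref{eq:MMSE_comb}. The denominator $\sum_{l\neq k}^{K}\E{|f_{kl}|^2}\mathcal{E}_{s,l}+\var(\xi_k)$ is built entirely from expectations and a variance, so it is a deterministic constant; call it $D_k$. Consequently the only quantity that varies with the channel realization inside the expectation is the numerator term $|f_{kk}|^2$, and the integrand may be written as $\phi(|f_{kk}|^2)$, where $\phi(x)\triangleq\log_2\left(1+(\mathcal{E}_{s,k}/D_k)x\right)$. I would then observe that $\phi$ is concave and increasing on $x\ge 0$, since $\mathcal{E}_{s,k}/D_k>0$. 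Applying Jensen's inequality gives $\E{\phi(|f_{kk}|^2)}\le\phi\left(\E{|f_{kk}|^2}\right)$, which is precisely the claimed bound once $D_k$ is expanded back out.

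Since the technique is the same one used throughout Section~\ref{sec:stream_wise}, the only point requiring care---and it is a minor one---is confirming that $D_k$ is genuinely constant, i.e.\ that the interference and noise terms in the denominator have already been averaged over the channel randomness so that no residual dependence on the realization survives there. I do not anticipate any real obstacle. The statistics of $f_{kk}$, which are needed only to render the bound explicit rather than to prove the inequality itself, would be obtained by the same conditioning-on-$\{\delta_{mk}\}$ approach used in the appendices for the conjugate-beamforming coefficients $g_{kk}$ and $g_{kl}$.
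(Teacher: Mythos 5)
Your proposal is correct and follows exactly the paper's own argument: the paper proves this lemma by a direct application of Jensen's inequality to~\eqref{eq:MMSE_comb}, which is precisely what you do, with the added (and sound) verification that the denominator is deterministic so that the integrand is a concave function of $\lvert f_{kk}\rvert^2$ alone. No gap; your write-up is just a more explicit version of the same one-line proof.
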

\begin{proof}
	This is a direct consequence of applying the Jensen's inequality to~\eqref{eq:MMSE_comb}.
\end{proof}
\begin{lemma}\normalfont
	A lower bound on the rate achievable by the $k$th UE, $R^{\text{MMSE}}_k$, can be written as,
	\begin{equation}
	R^{\text{MMSE}}_k\ge  \left[ \log_2\left({\Eb\left[\lvert f_{kk}\rvert^2\right]\mathcal{E}_{s,k}}\right)-\log_2(e)\frac{\text{var}(|f_{kk}|^2)}{2\Eb^2[|f_{kk}|^2]}-\log_2\left({\sum\limits_{\substack{l=1\\l\neq k}}^K\Eb\left[|f_{kl}|^2\right]\mathcal{E}_{s,l}+\text{var}(\xi_k)}\right)\right]^+.
	\label{eq:lower_Rk_MMSE}
	\end{equation}
\end{lemma}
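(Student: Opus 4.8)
The plan is to reproduce, \emph{mutatis mutandis}, the argument used in the proof of Theorem~1, since the rate expression \eqref{eq:MMSE_comb} shares exactly the structure of \eqref{eq:Rk}: the denominator of the effective signal-to-interference-plus-noise ratio is already deterministic (it consists of the precomputed quantities $\E{|f_{kl}|^2}$ and $\var(\xi_k)$), so that the only randomness surviving under the outer expectation resides in the numerator term $|f_{kk}|^2$. This structural match is what makes the same three-step recipe applicable here.

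First I would discard the additive unity inside the logarithm, invoking $\log_2(1+x)\ge\log_2(x)$ together with the non-negativity of the rate, to write
\begin{equation}
R^{\text{MMSE}}_k\ge\left[\E{\log_2\left(\frac{|f_{kk}|^2\mathcal{E}_{s,k}}{\sum_{\substack{l=1\\l\neq k}}^{K}\E{|f_{kl}|^2}\mathcal{E}_{s,l}+\var(\xi_k)}\right)}\right]^+.
\end{equation}
Splitting the logarithm into a numerator term and a denominator term, and then pulling the deterministic denominator out of the expectation, leaves $\E{\log_2(|f_{kk}|^2\mathcal{E}_{s,k})}$ as the sole stochastic contribution. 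Next I would normalize by defining $\psi_k\triangleq|f_{kk}|^2/\Eb[|f_{kk}|^2]$, so that $\E{\log_2(|f_{kk}|^2\mathcal{E}_{s,k})}=\log_2(\Eb[|f_{kk}|^2]\mathcal{E}_{s,k})+\E{\log_2(\psi_k)}$. Applying the elementary bound $\log_2(\psi_k)\ge\log_2(e)\left((\psi_k-1)-\tfrac{1}{2}(\psi_k-1)^2\right)$ and substituting $\Eb[\psi_k]=1$ together with $\Eb[(\psi_k-1)^2]=\var(\psi_k)=\var(|f_{kk}|^2)/\Eb^2[|f_{kk}|^2]$ then delivers the claimed bound \eqref{eq:lower_Rk_MMSE}.

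These inequality manipulations are routine and mirror the proof of Theorem~1 essentially line for line; I therefore do not expect them to present any difficulty. The genuine obstacle lies entirely in the supporting statistics. Unlike the conjugate-beamforming case, here $f_{kl}=\mathbf{h}_k^H(\mathbf{H}\mathbf{H}^H+N_0\mathbf{I}_{MN})^{-1}\mathbf{h}_l$ and $\xi_k$ involve the inverse of the random Gram matrix $\mathbf{H}\mathbf{H}^H$, so that tractable closed-form expressions for $\Eb[|f_{kk}|^2]$, $\var(|f_{kk}|^2)$, $\Eb[|f_{kl}|^2]$, and $\var(\xi_k)$ are out of reach. Consequently I would leave these moments in implicit form within the lemma statement and evaluate them empirically through Monte-Carlo simulation, in the same spirit as the paper treats the term $\Eb[\log_2(\det(\bm{\Psi}))]$ in Theorem~\ref{thm:perfect_lower_joint}. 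In short, the bound itself follows directly from Jensen's inequality and the quadratic logarithm estimate, while the characterization of the underlying moments is deferred to the numerical evaluation.
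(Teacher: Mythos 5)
Your proposal is correct and follows exactly the paper's route: the paper proves this lemma by invoking ``steps similar to those followed in Theorem~2'' (i.e., drop the $+1$, split the logarithm over the deterministic denominator, normalize via $\psi_k\triangleq|f_{kk}|^2/\Eb[|f_{kk}|^2]$, and apply the quadratic bound $\log_2(\psi_k)\ge\log_2(e)\bigl((\psi_k-1)-\tfrac{1}{2}(\psi_k-1)^2\bigr)$ with $\Eb[\psi_k]=1$), which is precisely what you do. Your closing remark that the moments of $f_{kl}$ and $\xi_k$ must be left implicit and evaluated empirically also matches the paper's own note immediately following the lemma.
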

\begin{proof}
	This can be shown using steps similar to those followed in Theorem 2.
\end{proof}
\noindent We note that, it is not possible to calculate the statistics of $f_{kl}$ in a closed form. Hence, these expressions need to be evaluated empirically.

Similarly, when only the estimated CSI is available at the CPU, the combining matrix becomes, 
$\mathbf{V}=\left(\mathbf{\hat{H}}\mathbf{\hat{H}}^H+\psi\mathbf{I}_{MN}\right)^{-1}\mathbf{\hat{H}}$, with $\psi$ being the regularization parameter~(for the generalized MMSE combining)\footnote{It is difficult to obtain the optimal value of $\psi$ in a closed form, and this needs to be optimized empirically.}. Consequently the combined signal for the $k$th user's stream can be expressed as,
\begin{multline}
r_k=\mathbf{\hat{h}}_{k}^H\left(\mathbf{\hat{H}}\mathbf{\hat{H}}^H+\psi\mathbf{I}_{MN}\right)^{-1}\mathbf{\hat{h}}_ks_k+\sum_{\substack{l=1\\l\neq k}}^{K}\mathbf{\hat{h}}_{k}^H\left(\mathbf{\hat{H}}\mathbf{\hat{H}}^H+\psi\mathbf{I}_{MN}\right)^{-1}\mathbf{\hat{h}}_ls_l\\+\sum_{\substack{l=1}}^{K}\mathbf{\hat{h}}_{k}^H(\mathbf{\hat{H}}\mathbf{\hat{H}}^H+\psi\mathbf{I}_{MN})^{-1}\mathbf{\tilde{h}}_ls_l+\sqrt{N_0}\mathbf{\hat{h}}_{k}^H\left(\mathbf{\hat{H}}\mathbf{\hat{H}}^H+\psi\mathbf{I}_{MN}\right)^{-1}\mathbf{w}\\=\hat{f}_{kk}s_k+\sum_{\substack{l=1\\l\neq k}}^{K}\hat{f}_{kl}s_l+\sum_{\substack{l=1}}^{K}\tilde{f}_{kl}s_l+\xi_k,
\end{multline}
with $\hat{f}_{kl}$, $\tilde{f}_{kl}$, and $\xi_k$ defined implicitly.  We can now write the rate achievable by the $k$th UE as,
\begin{equation}
R_k^{\text{MMSE}}=\E{\log_2\left(1+\frac{\left\lvert \hat{f}_{kk}\right\rvert^2\mathcal{E}_{s,k}}{\sum_{\substack{l=1\\l\neq k}}^{K}\E{\left \lvert \hat{f}_{kl} \right\rvert^2}\mathcal{E}_{s,l}+\sum_{\substack{l=1}}^{K}\E{\left \lvert \tilde{f}_{kl} \right\rvert^2}\mathcal{E}_{s,l}+ \var (\xi_k) }\right) }.
\label{eq:MMSE_comb_est}
\end{equation}
 We can again determine the bounds on the rates achievable by the $k$th user, using the results derived previously. However since closed form expressions for the moments of the effective channels coefficients are intractable, we omit the explicit statement of these expressions. 
 
\textsc{Computational Complexity:}
In case of the UEs transmitting symbols from a finite constellation $\mathcal{S}$, with accurate CSI being available at the CPU, 
the estimate of the symbol $s_k$ transmitted by the $k$th UE can be obtained as,
\begin{equation}
\hat{s}_k=\arg\min_{s\in\mathcal{S}}|r_k-f_{kk}s|.
\end{equation}
For soft symbol decoding, the probability of the $i$th constellation symbol, $s_{k,i}\in\mathcal{S}$, being transmitted by the $k$th UE is given as,
\begin{equation}
\Pr\{s_{k,i}|r_k\}=\frac{1}{\pi \left (\sum\limits_{\substack{l=1\\l\neq k}}^K\Eb\left[|g_{kl}|^2\right]\mathcal{E}_{s,l}+\text{var}(\xi_k)\right)}\text{exp}\left(-\frac{|r_k-f_{kk}s_{k,i}|^2}{\sum\limits_{\substack{l=1\\l\neq k}}^K\Eb\left[|g_{kl}|^2\right]\mathcal{E}_{s,l}+\text{var}(\xi_k)}\right).
\end{equation} 

Similarly, in case of estimated CSI, the hard estimate, and the soft symbol probability can be respectively written as,
\begin{equation}
\hat{s}_k=\arg\min_{s\in\mathcal{S}}|r_k-\hat{f}_{kk}s|,
\end{equation}
and
{\begin{equation}
	\text{Pr}\{s_{k,i}|r_k\}=\frac{1}{\pi\sigma_r^2}\text{exp}\left(-\frac{|r_k-\hat{g}_{kk}s_{k,i}|^2}{\sigma_r^2}\right),
	\end{equation}}
where $$\sigma_r^2=\sum_{l=1}^K\E{|\tilde{f}_{kl}|^2}\mathcal{E}_{s,l}+\sum\limits_{\substack{l=1\\l\neq k}}^K\Eb\left[|\hat{f}_{kl}|^2\right]\mathcal{E}_{s,l}+\text{var}(\xi_k).$$

In this case, with $K$ users, $M$ APs, and $N$ antennas per AP, for a constellation size $|\mathcal{S}|$, the computational complexity for hard symbol decoding is $\mathcal{O}((MN)^3+K|\mathcal{S}|)$.

\section{Simulation Results}\label{sec:results}
In this section we present simulation and numerical results to corroborate the validity of the bounds derived by us, and to analyze the performance of the CF massive MIMO system under LoS/ NLoS channels. Unless stated otherwise, the simulation parameters used for these experiments are listed in Table~\ref{tab:sims}.

\begin{table}[!t]
\caption{Simulation Parameters~\cite{IMT2020guide,IMT2020propagate} \label{tab:sims}} 

\begin{center}
\rowcolors{2}
{cyan!15!}{}
\renewcommand{\arraystretch}{1.0}
\begin{tabular}{l | p{4 cm} | l || l | p{4 cm} | l }
\hline 
 {\bf Parameters} & {\bf Descriptions} & {\bf Values} &  {\bf Parameters} & {\bf Descriptions} & {\bf Values}   \\
\hline 
\hspace{0.15cm}$d_0$ & Reference distance 
			& \hspace{0.12cm}$1$~m  &
			\hspace{0.15cm}$f_c$  & Carrier Frequency 
			& \hspace{0.12cm}$2$ GHz \\ 
\hspace{0.15cm}$\ell_m$  & AP antenna height
						& \hspace{0.12cm}$10$~m   &
\hspace{0.15cm}$\ell_k^{'}$   & UE antenna height
			& \hspace{0.12cm}$1.5$~m   \\ 
\hspace{0.15cm}$K$  & Number of UEs
			& \hspace{0.12cm}$64$   &
			\hspace{0.15cm}$MN$  & AP antenna density
						& \hspace{0.12cm}$1024$/km\textsuperscript{2}   \\ 
\hspace{0.15cm}$\mu$  & Average number of blockages/ unit area
			& \hspace{0.12cm}$300$/km\textsuperscript{2} &
			\hspace{0.15cm}$\alpha$  & Fraction of built-up area in the network
						& \hspace{0.12cm}$0.5$ \\ 
\hspace{0.15cm} $\gamma$    & Average altitude of blockage
			& \hspace{0.12cm}$20$ m  &
\hspace{0.15cm}   & Network area for simulation
			& $1$~km\textsuperscript{2} \hspace{0.12cm}  \\ 
\hline
\end{tabular} 
\end{center}
\end{table}%
%
%

\begin{figure}[t]
	\centering
	\includegraphics[width=0.7\textwidth]{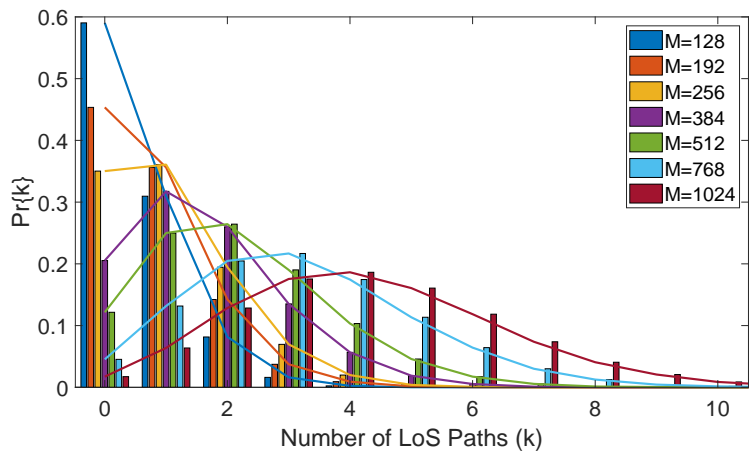}
	\caption{Probability of $k$ LoS paths between the APs and a user with single antenna APs}
	\label{fig:PMF}
\end{figure}

In Fig.~\ref{fig:PMF} we plot the probability mass function~(PMF) of an arbitrarily placed UE having $k$ LoS paths to/ from different APs for different number of single antenna APs~(M). We observe that increasing the number of APs from 128 to 1024 dramatically increases the probability existence of at least one LoS path to the user from approximately 40\% to well above 95\%, thereby resulting in an improved overall LoS coverage. The deterministic nature and high gain of these LoS channels encourage us to deploy more single antenna APs instead of a smaller number of APs with multiple antennas. 
\begin{figure}[t]
	\centering
	\includegraphics[width=0.7\textwidth]{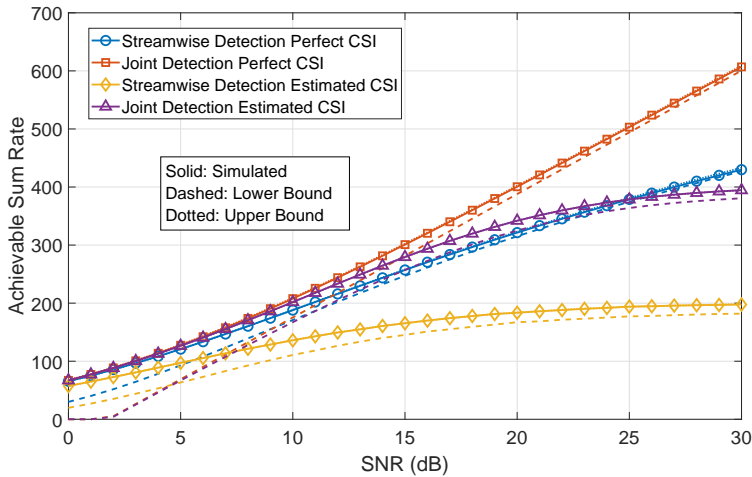}
	\caption{Performance of a 1024 AP CF massive MIMO system under LoS/NLoS channels under different conditions compared with the theoretically evaluated bounds.}
	\label{fig:bounds}
\end{figure}

In Fig.~\ref{fig:bounds} we plot the achievable sum rate of a 64 user CF massive MIMO system with 1024 single antenna APs, under four different settings entailing stream-wise detection and joint detection under both perfect CSI, and estimated CSI. In order to better quantify the effect of channel estimation errors on the system, we assume a received pilot SNR of 20~dB for all the channel coefficients. We observe that in all the cases, the upper bound, obtained using the Jensen's inequality, approximates the achievable rate almost exactly, and the lower bound follows it closely. This fact can be mainly attributed to the dominance of the LoS component(s) in the effective channels to different users, resulting in effectively deterministic channels to/ from all the users. We also note that the sum rates achievable via stream-wise decoding with conjugate beamforming and under perfect CSI are not only significantly lower than those achievable with joint decoding, but also show a saturation effect at higher data SNRs, thus pointing towards an interference limited system. 

\begin{figure}[t]
	\centering
	\includegraphics[width=0.7\textwidth]{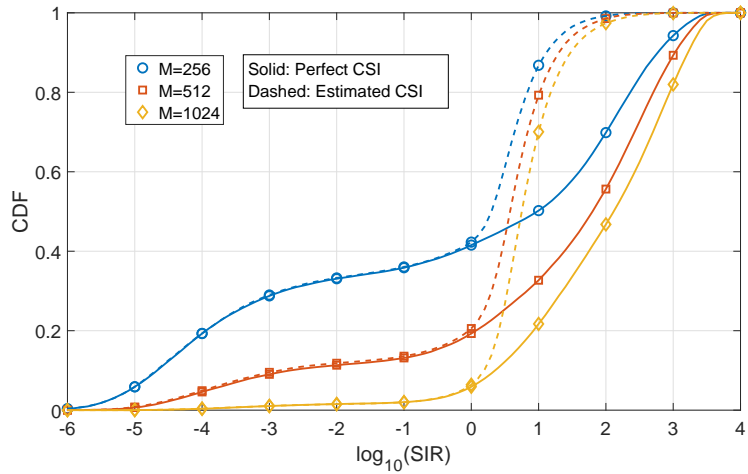}
	\caption{CDF of the achievable SIR for differnet users in a 1024 AP CF massive MIMO system under LoS/NLoS channels with conjugate beamforming based combining under different conditions.}
	\label{fig:CDF_MRC}
\end{figure}

To further investigate the distribution of rates among various users, we plot the cumulative distribution function~(CDF) of the signal to interference ratio~(SIR) achievable by different users in a 64 user system (averaged over 1000 realizations) in Fig.~\ref{fig:CDF_MRC}. We observe that even under perfect/ accurate CSI, a fraction of users approximately equal to the fraction of users without an LoS path, experiences SIRs below $0$ dB indicating a ``capture" of the system by the users with LoS paths. This necessitates the use of interference cancellation at the CPU, motivating the centralized MMSE technique discussed in Section~V.   

\begin{figure}[t]
	\centering
	\includegraphics[width=0.7\textwidth]{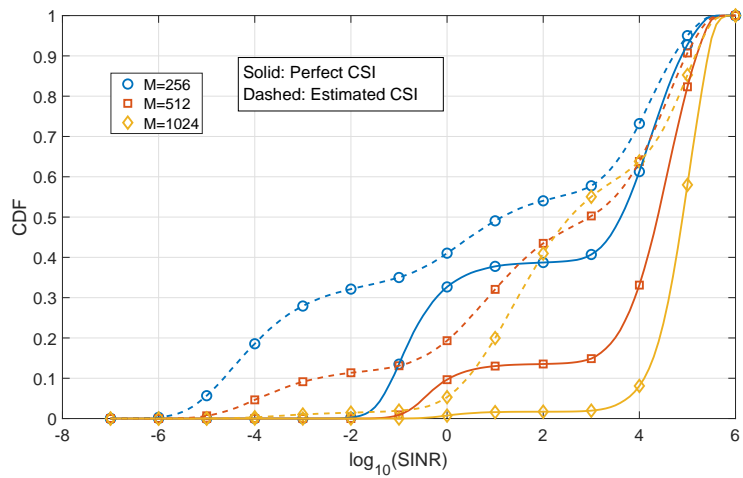}
		\caption{CDF of the achievable SIR for differnet users in a 1024 AP CF massive MIMO system under LoS/NLoS channels with MMSE combining under different conditions.}
	\label{fig:CDF_MMSE}
\end{figure}

In the absence of closed form expressions describing the performance of centralized MMSE combining for the system under observation, we first plot the CDF of achievable signal to interference plus noise ratio~(SINR) for a data SNR of 50~dB in Fig~\ref{fig:CDF_MMSE}. We observe that the CDF shifts significantly towards right for both accurate and estimated CSI at the CPU. Note that, instead of the SIR considered in Fig~\ref{fig:CDF_MRC}, that assumes infinite data SNR, we consider the SINR at a data SNR of 50dB, and hence, under exact CSI, the CDF peaks around an SINR of 50~dB, showing a noise limited channel as opposed to the interference-limited channel, observed in Fig.~\ref{fig:CDF_MRC}.  

\begin{figure}[t]
	\centering
	\includegraphics[width=0.7\textwidth]{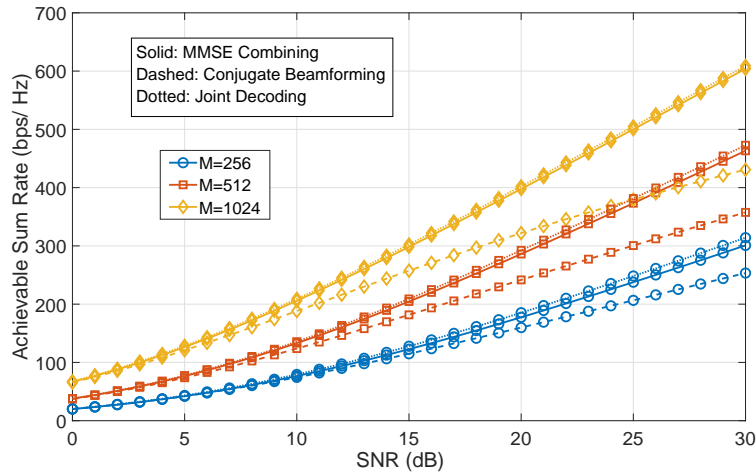}
	\caption{Performance of a 1024 AP CF massive MIMO system under LoS/NLoS channels with accurate CSI for different combining schemes.}
	\label{fig:MMSE_noer}
\end{figure}

\begin{figure}[t]
	\centering
	\includegraphics[width=0.7\textwidth]{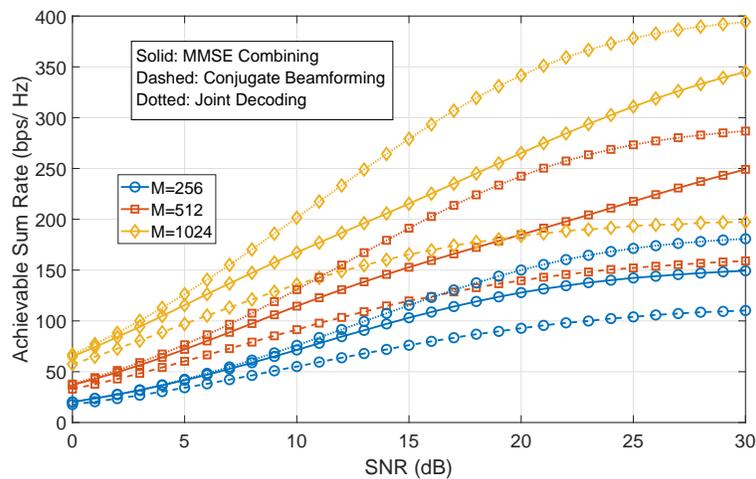}
	\caption{Performance of a 1024 AP CF massive MIMO system under LoS/NLoS channels with estimated CSI for different combining schemes.}
	\label{fig:MMSE_er}
\end{figure}

In Figures~\ref{fig:MMSE_noer} and~\ref{fig:MMSE_er}, we compare the sum rates achievable by a 1024 AP CF massive MIMO system, using the three combining schemes discussed in this paper for both accurate and estimated CSI. In Fig.~\ref{fig:MMSE_noer} we observe that under accurate CSI, the rates achievable with MMSE combining closely approximate the rates achievable using joint detection at the CPU, indicating accurate cancellation of interfering data streams. However as seen in Fig.~\ref{fig:MMSE_er}, the difference between the rates achievable by MMSE combining and joint decoding becomes significant in case of estimated CSI being avalibale at the CPU. This shows that we require better algorithms for CSI acquisition, as well as interference cancellation and data detection, in order to achieve the true potential of CF massive MIMO systems.

\begin{figure}[t]
	\centering
	\includegraphics[width=0.7\textwidth]{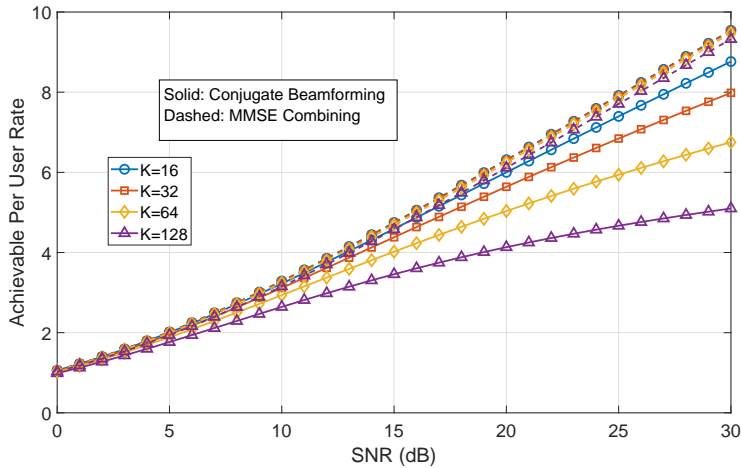}
	\caption{Performance of a 1024 AP CF massive MIMO system under LoS/NLoS channels with accurate CSI for different combining schemes and different numbers of UEs.}
	\label{fig:vsK}
\end{figure}

In Fig.~\ref{fig:vsK} we compare the per user achievable data rates in centralized MMSE and conjugate beamforming for different number of UEs under the availability of accurate CSI at the CPU. Here, we consider only the case with accurate CSI at the CPU/ APs to better isolate the effect of inter-user interference on such systems. It is observed that the per user data rate with MMSE combining remains largely unaffected by the increase in the number of users, whereas an increase in the number of users with conjugate beamforming based combining adversely affects the per user rate. This is due to the increased inter-user interference in the latter case, and is in accordance with intuition.

\begin{figure}[t]
	\centering
	\includegraphics[width=0.7\textwidth]{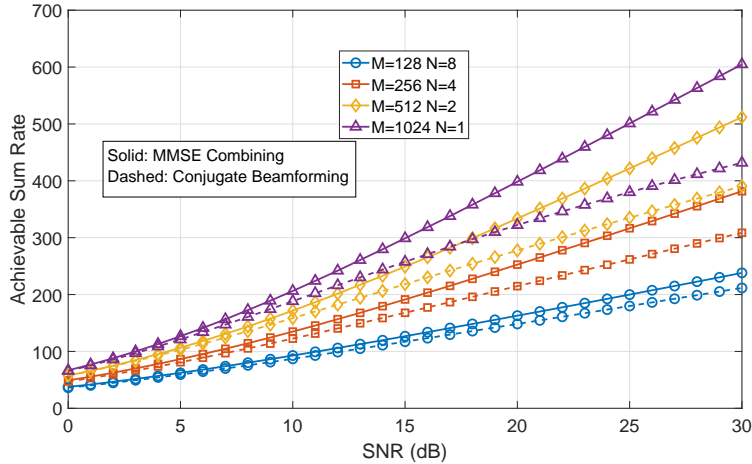}
	\caption{Performance of a CF massive MIMO system under LoS/NLoS channels with accurate CSI for different combining schemes and AP configurations.}
	\label{fig:vsM}
\end{figure}

In Fig.~\ref{fig:vsM} we plot the achievable sum rate for a 64 UE system for different AP configurations, while keeping the antenna density over the area of interest fixed. We observe that an increase in the number of APs with a fixed antenna density results in an almost threefold increase in the achievable sum rates. This justifies our conjecture in Fig.~\ref{fig:PMF} that increasing the number of APs implies an increase in the number of LoS paths which would result in an improvement in the achievable data rates. Here it is pertinent to note that the difference between the performance of centralized MMSE and conjugate beamforming based combining increases dramatically with an increase in the AP density, thereby indicating that the results derived for NLoS channels (cellular massive MIMO) cannot be trivially extended to LoS channels (CF massive MIMO).

\section{Conclusions}
In this paper, we discussed the performance of CF massive MIMO systems under the possibility of probabilistic LoS/ NLoS channels. Using well understood bounds on the achievable rate, we evaluated the performance of these systems under both conjugate beamforming based decoding and joint decoding at the CPU. We observed that in the presence of LoS channels, conjugate beamforming based data detection performs much worse than joint data detection due to the increased inter user interference. We also evaluated the performance of centralized MMSE based data detection at the CPU. We observed that under the availability of accurate CSI at the CPU, MMSE based data detection closely approximates joint detection performance, whereas in the presence of estimated CSI, the performance gap between MMSE detection and joint detection increases. This shows that we require better algorithms for CSI acquisition, as well as interference cancellation and data detection, in order to achieve the true potential of CF massive MIMO systems.

\begin{appendix}

\subsection{The Statistics of the $\mathbf{G}$ matrix}\label{App:A}
In the following, we provide the statistics of diagonal and off-diagonal elements of $\mathbf{G}$. Here, we first expand the diagonal element $g_{kk}$, and then derive its mean and variance as given below.
\begin{multline}
	g_{kk}=\mathbf{h}_k^H\mathbf{h}_k=((\bm{\partial}_k\otimes\mathbf{1}_N)\odot\mathbf{\bar{h}}_k+(\bm{\beta}_k\otimes\mathbf{1}_N)\odot\mathbf{\dot{h}}_k)^H((\bm{\partial}_k\otimes\mathbf{1}_N)\odot\mathbf{\bar{h}}_k+(\bm{\beta}_k\otimes\mathbf{1}_N)\odot\mathbf{\dot{h}}_k)\\
	=\sum_{m=1}^{M}\delta_{mk}\sum_{i=1}^N\lvert\bar{h}_{mi,k}\rvert^2+\sum_{m=1}^M\beta_{mk}\sum_{i=1}^N|\dot{h}_{mi,k}|^2+2\sum_{m=1}^M\delta_{mk}\sqrt{\beta_{mk}}\sum_{i=1}^N\Re\{\bar{h}_{mi,k}\dot{h}_{mi,k}^*\}\\=\underbrace{N\sum_{m=1}^M\delta_{mk}G_kG_m\left(\frac{\ell_{k}^{'}\ell_m}{4\pi x_{mk}}\right)^2}_{\triangleq \, g_{kk}^{(1)}}+\underbrace{\sum_{m=1}^M\beta_{mk}\sum_{i=1}^N|\dot{h}_{mi,k}|^2}_{\triangleq \, g_{kk}^{(2)}}+\underbrace{2\sum_{m=1}^M\delta_{mk}\sqrt{\beta_{mk}}\sum_{i=1}^N\Re\{\bar{h}_{mi,k}\dot{h}_{mi,k}^*\} }_{\triangleq \, g_{kk}^{(3)}} \, ,
	\label{eq:gkkexpand}
\end{multline}
We can now write $|g_{kk}|^2=|g^{(1)}_{kk}|^2+|g^{(2)}_{kk}|^2+|g^{(3)}_{kk}|^2+2\Re\{g^{(1)}_{kk}g^{(2)*}_{kk}\} +2\Re\{g^{(3)}_{kk}g^{(2)*}_{kk}\}+2\Re\{g^{(1)}_{kk}g^{(3)*}_{kk}\}.$ From \eqref{eq:gkkexpand}, it is easy to show that $E[g_{kk}^{(1)}g_{kk}^{(3)*}]=E[g_{kk}^{(2)}g_{kk}^{(3)*}]=0$, and 

\begin{equation}
\label{eq:mean_gkk}
	\Eb[g_{kk}]=N\sum_{m=1}^M\left(P_{mk}G_kG_m\left(\frac{\ell_{k}^{'}\ell_m}{4\pi x_{mk}}\right)^2+\beta_{mk}\right), 
\end{equation} 

\begin{multline}
\hspace{-1 cm} E\left[\left|g_{kk}^{(1)}\right|^2\right]=N^2G_k^2\left(\frac{\ell_{k}^{'}}{4\pi}\right)^4\left(\sum_{m=1}^{M}P_{mk}G_m^2\left(\frac{\ell_m}{x_{mk}}\right)^4+\sum_{m=1}^{M}\sum_{\substack{l=1\\l\neq m}}^{M}P_{mk}P_{lk}G_mG_l\left(\frac{\ell_m\ell_l}{x_{mk}x_{lk}}\right)^2\right),
\end{multline}
\begin{equation}
E\left[\left|g_{kk}^{(2)}\right|^2\right]=N\sum_{m=1}^M\beta_{mk}^2+N^2\sum_{m=1}^M\sum_{l=1}^M\beta_{mk}\beta_{lk},
\end{equation}
\begin{equation}
E\left[\left|g_{kk}^{(3)}\right|^2\right]=2NG_k\left(\frac{\ell_{k}^{'}}{4\pi}\right)^2\sum_{m=1}^MP_{mk}\beta_{mk}G_m\left(\frac{\ell_m}{x_{mk} }\right)^2,
\end{equation}
and finally,
\begin{equation}
E[g_{kk}^{(1)}g_{kk}^{(2)*}]=E[g_{kk}^{(1)}]E^*[g_{kk}^{(2)}]=N^2G_k\left(\frac{\ell_{k}^{'}}{4\pi}\right)^2\sum_{l=1}^M\beta_{lk}\sum_{m=1}^MP_{mk}G_m\left(\frac{\ell_m}{x_{mk} }\right)^2.
\label{eq:gkk2mpart}
\end{equation}
Therefore, from \eqref{eq:mean_gkk}-\eqref{eq:gkk2mpart}, we have
\begin{multline}
	\text{var}(g_{kk})= \E{|g_{kk} - \E{g_{kk}}|^2} = \E{|g_{kk}|^2} - \left(\E{g_{kk}}\right)^2\\
	= N^2G_k^2\left(\frac{\ell_{k}^{'}}{4\pi}\right)^4\sum_{m=1}^MP_{mk}(1-P_{mk})\left(\frac{\ell_m}{x_{mk}}\right)^4+N\sum_{m=1}^M\beta_{mk}^2+2N\sum_{m=1}^MP_{mk}\beta_{mk}G_kG_m\left(\frac{\ell_{k}^{'}\ell_m}{4\pi x_{mk}}\right)^2.
	\label{eq:gkkvar}
\end{multline}


We next evaluate the fourth moment of $g_{kk}$ as follows.
\begin{align}
\label{eq:gkk4moment}
\nonumber \E{\left| g_{kk} \right|^4}  & = \sum\limits_{i=1}^{3} \E{\left| g_{kk}^{(i)} \right|^4} + 6\sum\limits_{i=1}^{3}\sum\limits_{j=1, j \neq i}^{3} \E{ \left| g_{kk}^{(i)} \right|^2 \left| g_{kk}^{(j)} \right|^2 } + 4 \E{ \left| g_{kk}^{(1)} \right|^3    }\E{  g_{kk}^{(2)}  }\\ 
& + 4 \E{ \left| g_{kk}^{(2)} \right|^3    } \E{   g_{kk}^{(1)} } + 12 \E{ \left| g_{kk}^{(3)} \right|^2  g_{kk}^{(2)}  } \E{  g_{kk}^{(1)} } \, ,
\end{align} 
\noindent where
\begin{align}
\label{eq:compgkk4m}
\nonumber \E{\left| g_{kk}^{(1)} \right|^4} & = N^4 \left[\substack{\sum\limits_{ \substack{ m_i \in \{1,2,...,M\}; m_i \neq m_j;\\ i = \{1,2,3,4\} ;  j = \{1,2,3,4\} } } \left( \prod_{l=1}^{4}a_{m_l k}P_{m_l k} \right)\,+ 6\sum\limits_{m=1}^{M} a_{mk}^4 P_{mk} \\+ \sum\limits_{ \substack{ m_i \in \{1,2,...,M\};\\ i = \{1,3,4\};\\m_1 \neq m_3 \neq m_4  } } P_{m_1k}P_{m_3k}P_{m_4k}a_{m_1k}^2a_{m_3k}a_{m_4k} \\+ \sum\limits_{ \substack{ m_i \in \{1,2,...,M\};\\ i = \{1,2,4\};\\m_1 \neq m_2 \neq m_4  } }P_{m_1k}P_{m_2k}P_{m_4k}(a_{m_1k}^2a_{m_2k}a_{m_4k}+a_{m_1k}a_{m_2k}^2a_{m_4k})\\+ \sum\limits_{ \substack{ m_i \in \{1,2,...,M\};\\ i = \{1,2,3\};\\m_1 \neq m_2 \neq m_3  } }P_{m_1k}P_{m_2k}P_{m_3k}\left( a_{m_1k}^2a_{m_2k}a_{m_3k} + a_{m_1k}a_{m_2k}^2a_{m_3k}+a_{m_1k}a_{m_2k}a_{m_3k}^2\right)}  \right. \\
&\left. + \left.\substack{ 2\sum\limits_{ \substack{ m_i \in \{1,2,...,M\};\\ i = \{1,2\};\\ m_1 \neq m_2  } } P_{m_1k}P_{m_2k}\left( a_{m_1k}^2 a_{m_2k}^2 + a_{m_1k}a_{m_2k}^3+a_{m_1k}^3 a_{m_2k}\right) \, + \, \sum\limits_{ \substack{ m_i \in \{1,2,...,M\};\\ i = \{1,4\};\\ m_1 \neq m_4  } } 2a_{m_1k}^3 a_{m_4k} \\ + \sum\limits_{ \substack{ m_i \in \{1,2,...,M\};\\ i = \{1,3\};\\ m_1 \neq m_3  } } P_{m_1k}P_{m_3k}\left( a_{m_1k}^2 a_{m_3k}^2 + 2a_{m_1k}^3 a_{m_3k}\right) }\right\} \right] ,\\
\nonumber \E{\left| g_{kk}^{(2)} \right|^4} & = \sum\limits_{\substack{ m_i \in \{1,2,...,M\};\\ i = \{1,2,3,4\} }} \left(\prod\limits_{i=1}^{4}\beta_{m_ik} \right) N \left[N^3 + 12N^2 + 104N +513 \right]\\
\nonumber \E{\left| g_{kk}^{(3)} \right|^4} & = N^2\left[\left\{ \substack{18\sum\limits_{m=1}^{M} a_{mk}^2\beta_{mk}^2 P_{mk} + 2\sum\limits_{ \substack{ m_i \in \{1,2,...,M\};\\ i = \{1,2\};\\ m_1 \neq m_2  } } P_{m_1k}P_{m_2k} a_{m_1k} a_{m_2k} \beta_{m_1k} \beta_{m_2k}\\ + \sum\limits_{ \substack{ m_i \in \{1,2,...,M\};\\ i = \{1,3\};\\ m_1 \neq m_3  } } P_{m_1k}P_{m_3k} a_{m_1k} a_{m_3k} \beta_{m_1k} \beta_{m_3k} + }\right\} \right], \\
\E{\left| g_{kk}^{(1)} \right|^3} & = N^3 \left[\substack{ \sum\limits_{ \substack{ m_i \in \{1,2,...,M\}; m_i \neq m_j;\\ i = \{1,2,3\} ;  j = \{1,2,3\} } } \left( \prod_{l=1}^{3}a_{m_l k}P_{m_l k} \right)\,+ 2\sum\limits_{m=1}^{M} a_{mk}^3 P_{mk} \\+ \sum\limits_{ \substack{ m_i \in \{1,2,...,M\};\\ i = \{1,2\};\\ m_1 \neq m_2  } } P_{m_1k}P_{m_2k}\left( a_{m_1k}a_{m_2k}^2+a_{m_1k}^2 a_{m_2k}\right)       }\right] \\
\E{\left| g_{kk}^{(2)} \right|^3} & = \sum\limits_{\substack{ m_i \in \{1,2,...,M\};\\ i = \{1,2,3\} }} \left(\prod\limits_{i=1}^{3}\beta_{m_ik} \right) N\left[N^2+3N+26\right] \\
\E{\left| g_{kk}^{(2)} \right|^2\left| g_{kk}^{(3)} \right|^2} & = 2N\sum\limits_{m=1}^{M}\beta_{mk}a_{mk} P_{mk} \left[\substack{ \sum\limits_{\substack{ m_i \in \{1,2,...,M\};\\ i = \{1,2\};\\ m_1 \neq m_2}}\beta_{m_1k}\beta_{m_2k}N(N-1) \\+\sum\limits_{\substack{ m_i \in \{1,2,...,M\};\\ i = \{1,2\}}} N \beta_{m_1k}\beta_{m_2k}+ \sum\limits_{m_1=1}^{M}\beta_{m_1k}^2N(N-1)+ 3N\sum\limits_{m_1=1}^{M}\beta_{m_1k}^2      }\right] \\
\E{\left| g_{kk}^{(3)} \right|^2 g_{kk}^{(2)} } & = \frac{1}{4}\left[\substack{\sum\limits_{\substack{ m_i \in \{1,2,...,M\};\\ i = \{1,2\}}}N^2\beta_{m_1k}\beta_{m_2k}a_{m_2k}P_{m_2k} \\ + \sum\limits_{\substack{ m_i \in \{1,2,...,M\};\\ i = \{1,2\};\\ m_1 \neq m_2}}N \beta_{m_1k}\beta_{m_2k}a_{m_2k}P_{m_2k} + \sum\limits_{m=1}^{M}3N\beta_{m_1k}\beta_{mk}^2a_{mk}P_{mk}   }\right] \, .
\label{eq:compgkk4me}
\end{align}
\noindent Using \eqref{eq:mean_gkk}-\eqref{eq:gkk4moment}, we can evaluate the variance of $|g_{kk}|^2$ as $\var{|g_{kk}|^2} = \E{|g_{kk}|^4} - \left(\E{|g_{kk}|^2}\right)^2$.
Similarly, the off-diagonal element $g_{kl}$ is given by
\begin{multline}
	g_{kl}=\mathbf{h}_k^H\mathbf{h}_l=((\bm{\partial}_k\otimes\mathbf{1}_N)\odot\mathbf{\bar{h}}_k+(\bm{\beta}_k\otimes\mathbf{1}_N)\odot\mathbf{\dot{h}}_k)^H((\bm{\partial}_l\otimes\mathbf{1}_N)\odot\mathbf{\bar{h}}_l+(\bm{\beta}_l\otimes\mathbf{1}_N)\odot\mathbf{\dot{h}}_l)\\
	=\sum_{m=1}^{M}\delta_{mk}\delta_{ml}\sum_{i=1}^N\bar{h}_{mi,k}\bar{h}_{mi,l}+\sum_{m=1}^M\sqrt{\beta_{mk}\beta_{ml}}\sum_{i=1}^N|\dot{h}_{mi,k}|^2+2\sum_{m=1}^M\delta_{mk}\sqrt{\beta_{ml}}\sum_{i=1}^N\Re\{\bar{h}_{mi,k}\dot{h}_{mi,l}^*\}\\
	=\sum_{m=1}^M\delta_{ml}\delta_{mk}G_m\sqrt{G_kG_l}\left(\frac{\ell_{k}^{'}\ell_l\ell_m^2}{16\pi^2 x_{mk}x_{ml}}\right)e^{\iota\frac{2\pi}{\lambda_c}(x_{mk}-x_{ml})}\sum_{i=1}^N e^{\iota \frac{2\pi d}{\lambda_c}i(\sin(\theta_{mk})-sin(\theta_{ml}))} \\
	+\sum_{m=1}^M\sqrt{\beta_{mk}\beta_{ml}}\sum_{i=1}^N\dot{h}^*_{mi,l}\dot{h}_{mi,k}+2\sum_{m=1}^M\delta_{mk}\sqrt{\beta_{ml}}\sum_{i=1}^N\Re\{\bar{h}_{mi,k}\dot{h}_{mi,l}^*\}=g_{kl}^{(1)}+g_{kl}^{(2)}+g_{kl}^{(3)},
	\label{eq:gkldef}
\end{multline}
where $g_{kl}^{(i)}$ ($i = 1,2,3$) are defined implicitly. From \eqref{eq:gkldef}, we have
\begin{equation}
	\Eb[g_{kl}]=\sum_{m=1}^MP_{ml}P_{mk}G_m\sqrt{G_kG_l}\left(\frac{\ell_{k}^{'}\ell_l\ell_m^2}{16\pi^2 x_{mk}x_{ml}}\right)e^{\iota\frac{2\pi}{\lambda_c}(x_{mk}-x_{ml})}\sum_{i=1}^N e^{\iota \frac{2\pi d}{\lambda_c}i(\sin(\theta_{mk})-sin(\theta_{ml}))},
\end{equation}
and 
\begin{multline}
	\text{var}(g_{kl})=\sum_{m=1}^MP_{ml} (1-P_{ml}) P_{mk} (1-P_{mk}) G_m\sqrt{G_kG_l}\left(\frac{\ell_{k}^{'}\ell_l\ell_m^2}{16\pi^2 x_{mk}x_{ml}}\right)e^{\iota\frac{2\pi}{\lambda_c}(x_{mk}-x_{ml})}\\ \times \sum_{i=1}^N e^{\iota \frac{2\pi d}{\lambda_c}i(\sin(\theta_{mk})-sin(\theta_{ml}))}+N\sum_{m=1}^M\beta_{mk}\beta_{ml}+4NG_k\left(\frac{\ell_{k}^{'}}{4\pi}\right)^2\sum_{m=1}^MP_{mk}\beta_{ml}G_m\left(\frac{\ell_m}{x_{mk} }\right)^2.
\end{multline}


\subsection{Variance of $z_k$}\label{App:B}
We know that $z_k=\sqrt{N}_0\mathbf{h}_k^H\mathbf{w}$. Consequently, we have $\text{var}({z_k})=N_0\mathbf{h}_k^H\E{\mathbf{w}\mathbf{w}^H}\mathbf{h}_k$, i.e.,
\begin{equation}
	\text{var}({z_k})=N_0\mathbf{h}_k^H\mathbf{I}_{MN}\mathbf{h}_k\\=N_0N\sum_{m=1}^M\left(P_{mk}G_kG_m\left(\frac{\ell_{k}^{'}\ell_m}{4\pi x_{mk}}\right)^2+\beta_{mk}\right).
\end{equation}


\subsection{Statistics of the Effective Channel under Estimated CSI}\label{App:C}
We know that the true channel between the $k$th UE and the $m$th AP, $\mathbf{h}_{mk}$ can be written in terms of its estimate as in~\eqref{eq:ch_est}. Therefore, 
\begin{equation}
	\mathbf{\hat{h}}^H_{mk}\mathbf{h}_{mk}=\mathbf{\hat{h}}^H_{mk}\mathbf{C}^{\frac{1}{2}}_{mk}\mathbf{\hat{h}}_{mk}+\mathbf{\hat{h}}^H_{mk}\mathbf{C}_{mk}\mathbf{\tilde{h}}_{mk}=\hat{g}_{m,kl}+\tilde{g}_{m,kl},
\end{equation}
Consequently, we have $\hat{g}_{kk}=\sum\limits_{m=1}^{M} \hat{g}_{m,kl} = \sum\limits_{m=1}^M\mathbf{\hat{h}}^H_{mk}\mathbf{C}^{\frac{1}{2}}_{mk}\mathbf{\hat{h}}_{ml}$. Clearly,
\begin{equation}
	\E{\hat{g}_{kk}}=G_k\left(\frac{\ell'_k}{4\pi} \right)^2\sum_{m=1}^M P_{mk}G_m\left(\frac{\ell_m}{x_{mk}} \right)^2 \mathbf{a}^H(\theta_{mk})\mathbf{\check{C}}_{mk}^{\frac{1}{2}}\mathbf{a}(\theta_{mk})+\sum_{m=1}^M\beta_{mk}\text{Tr}\{\E{\mathbf{C}_{mk}}\},
\end{equation}
where
\begin{multline}
	\mathbf{\check{C}}_{mk}=\mathcal{E}_p \left( G_mG_k\left( \frac{\ell'_k\ell_m}{x_{mk}} \right)^2\mathbf{a}(\theta_{mk})\mathbf{a}^{H}(\theta_{mk})+\beta_{mk}\mathbf{I}_N \right)\\\times\left(\mathcal{E}_p G_mG_k\left( \frac{\ell'_k\ell_m}{x_{mk}} \right)^2\mathbf{a}(\theta_{mk})\mathbf{a}^{H}(\theta_{mk})+(\beta_{mk}\mathcal{E}_p+N_0)\mathbf{I}_N \right)^{-1}\\ \times \left( G_mG_k\left( \frac{\ell'_k\ell_m}{x_{mk}} \right)^2\mathbf{a}(\theta_{mk})\mathbf{a}^{H}(\theta_{mk})+\beta_{mk}\mathbf{I}_N \right),
\end{multline}
and 
\begin{equation}
	\E{\mathbf{C}_{mk}^{\frac{1}{2}}}=\mathbf{\check{C}}^{\frac{1}{2}}_{mk}P_{mk}+(1-P_{mk})\frac{\beta_{mk}\mathcal{E}_p}{\beta_{mk}\mathcal{E}_p+N_0}\mathbf{I}_N.
\end{equation}
Here, $\mathbf{\check{C}}_{mk}$ represents the matrix $\mathbf{C}_{mk}$ in case an LoS link exists between the $m$th AP and the $k$th UE. We can now write $	\lvert \hat{g}_{kk} \rvert^2 =\sum\limits_{m=1}^M\sum\limits_{l=1}^M\mathbf{\hat{h}}^H_{mk}\mathbf{C}^{\frac{1}{2}}_{mk}\mathbf{\hat{h}}_{mk}\mathbf{\hat{h}}^T_{ml}\mathbf{C}^{\frac{1}{2}}_{mk}\mathbf{\hat{h}}^{*}_{ml}$, and therefore, we have

\begin{multline}
	\E{\lvert \hat{g}_{kk} \rvert^2}=G^2_k\left(\frac{\ell'_k}{4\pi} \right)^4\sum_{m=1}^M P_{mk}G_m^2\left(\frac{\ell_m}{x_{mk}} \right)^4 (\mathbf{a}^H(\theta_{mk})\mathbf{\check{C}}_{mk}\mathbf{a}(\theta_{mk}))^2+\sum_{m=1}^M\beta^2_{mk}\text{Tr}^2\{\E{\mathbf{C}_{mk}}\}\\+G^2_k\left(\frac{\ell'_k}{4\pi} \right)^4\sum_{m=1}^M\sum_{\substack{l=1\\l\neq m}}^M P_{ml}P_{mk}G_mG_l\left(\frac{\ell_m}{x_{mk}} \right)^2\left(\frac{\ell_l}{x_{lk}} \right)^2 (\mathbf{a}^H(\theta_{mk})\mathbf{\check{C}}_{mk}\mathbf{a}(\theta_{mk}))(\mathbf{a}^H(\theta_{lk})\mathbf{\check{C}}_{lk}\mathbf{a}(\theta_{lk}))\\+\sum_{m=1}^M\sum_{l=1}^M\beta_{mk}\beta_{lk}\text{Tr}\{\E{\mathbf{C}_{mk}}\}\text{Tr}\{\E{\mathbf{C}_{lk}}\}\\+4G_k\left(\frac{\ell'_k}{4\pi} \right)^2\sum_{m=1}^M P_{mk}G_m\beta_{mk}\left(\frac{\ell_m}{x_{mk}} \right)^2 \mathbf{a}^H(\theta_{mk})\mathbf{\check{C}}_{mk}\mathbf{a}(\theta_{mk})\text{Tr}\{\mathbf{\check{C}}_{mk}\}.
\end{multline}

Similarly, we have $\tilde{g}_{kk}=\sum\limits_{m=1}^M \tilde{g}_{m,kl}= \sum\limits_{m=1}^M\mathbf{\hat{h}}^H_{mk}\mathbf{\bar{C}}^{\frac{1}{2}}_{mk}\mathbf{\tilde{h}}_{ml}$, and $\lvert \tilde{g}_{kk}\rvert^2 =\sum\limits_{m=1}^M\sum\limits_{l=1}^{M}\mathbf{\hat{h}}^H_{mk}\mathbf{\bar{C}}_{mk}\mathbf{\tilde{h}}_{ml}\mathbf{\tilde{h}}^H_{mk}\mathbf{\bar{C}}^{\frac{1}{2}}_{lk}\mathbf{\hat{h}}_{ml}$. Therefore, we can write

\begin{equation}
	\E{\lvert \tilde{g}_{kk}\rvert^2} =\sum_{m=1}^M\sum_{l=1}^{M}\E{\mathbf{\hat{h}}^H_{mk}\mathbf{\bar{C}}^{\frac{1}{2}}_{mk}\E{\mathbf{\tilde{h}}_{ml}\mathbf{\tilde{h}}^H_{mk}}\mathbf{\bar{C}}^{\frac{1}{2}}_{lk}\mathbf{\hat{h}}_{lk} }  =\sum_{m=1}^M\E{\mathbf{\hat{h}}^H_{mk}\mathbf{\bar{C}}_{mk}\mathbf{\hat{h}}_{mk} } \, .
\end{equation}
\end{appendix}

\bibliographystyle{IEEEtran}
\bibliography{bibJournalList,fading}
\end{document}